\newtheorem{theorem}{Theorem}
\newcommand{\p}{\mathbf p}
\newcommand{\rv}{\mathbf r}
\newcommand{\vol}{\mathsf{vol}}
\newcommand{\polylog}{\mathsf{polylog}}
\newcommand{\rank}{\mathsf{rank}}
\newcommand{\myparagraph}[1]{\medskip \noindent {\bf #1.}}
\newcommand{\defn}[1]{\emph{\textbf{#1}}}
\newcommand{\codevar}[1]{\mbox{#1}}
\newcommand{\algorithmicbreak}{\textbf{break}}
\newcommand{\set}{sparseSet} 
\newcommand{\vset}{vertexSubset} 
\newcommand{\emap}{\textproc{edgeMap}}
\newcommand{\vmap}{\textproc{vertexMap}}
\newcommand{\size}{\textbf{size}}
\newcommand{\fetchAdd}{\mbox{fetchAdd}}
\newcommand{\text}{\mbox}
\begin{document}

\title{Parallel Local Graph Clustering}

\title{Parallel Local Graph Clustering}

\author{Julian Shun\authorcr UC Berkeley \authorcr jshun@eecs.berkeley.edu
\and Farbod Roosta-Khorasani \authorcr International Computer Science Institute \authorcr and UC Berkeley \authorcr farbod@icsi.berkeley.edu
\and
Kimon Fountoulakis
        \authorcr International Computer Science Institute \authorcr and UC Berkeley \authorcr
        kfount@berkeley.edu
\and
Michael W. Mahoney
       \authorcr International Computer Science Institute \authorcr and UC Berkeley \authorcr
        mmahoney@stat.berkeley.edu
}

\date{}

\maketitle

\begin{abstract}
Graph clustering has many important applications in computing, but due
to growing sizes of graphs, even traditionally fast clustering methods
such as spectral partitioning can be computationally expensive for
real-world graphs of interest.  Motivated partly by this, so-called
local algorithms for graph clustering have received significant
interest due to the fact that they can find good clusters in a graph
with work proportional to the size of the cluster rather than that of
the entire graph. This feature has proven to be crucial in making such
graph clustering and many of its downstream applications efficient in
practice. While local clustering algorithms are already faster than
traditional algorithms that touch the entire graph, they are sequential and there is an
opportunity to make them even more efficient via
parallelization. 
In this paper, we show how to parallelize many of these
algorithms in the shared-memory multicore setting, and we analyze the
parallel complexity of these algorithms.  We present comprehensive
experiments on large-scale graphs showing that our parallel algorithms
achieve good parallel speedups on a modern multicore machine, thus
significantly speeding up the analysis of local graph clusters in the
very large-scale setting.
\end{abstract}

\section{Introduction}\label{sec:intro}
Given a graph, the task of graph clustering is often described as that
of finding a set (or sets) of vertices that are more related, in some
sense, to each other than to other vertices in the graph.
Applications of graph clustering arise in many areas of computing,
including in community detection in social
networks~\cite{Harenberg},
load balancing parallel
computations~\cite{Devine}, unsupervised
learning~\cite{Ng01onspectral}, and optimizing digital map
databases~\cite{Huang1996}.

There are a large number of algorithms for graph clustering, each with
different computational costs and producing clusters with different
properties (see~\cite{Schaeffer2007} for a survey of graph clustering
algorithms). However, most traditional algorithms for graph clustering
require touching the
entire graph at least once, and often many more times. With the
massive graphs that are available today, e.g., extremely large graphs
arising in social media, scientific, and intelligence applications,
these traditionally-fast algorithms can be very computationally 
expensive. A standard example of this can be found in the large-scale
empirical analysis of Leskovec et
al.~\cite{LeskovecLDM09,Leskovec2010} and Jeub et al.~\cite{Jeub15}.
Therefore, there has been a surge of interest in \emph{local graph
  clustering algorithms}, or algorithms whose running time depends
only on the size of the cluster found and is independent of or depends
at most polylogarithmically on the size of the entire graph (we refer to this as a \emph{local running time}).

Local graph clustering was first used by Spielman and
Teng~\cite{Spielman2004} to develop nearly linear-time algorithms for
computing approximately-balanced graph partitions and solving sparse
linear systems. Since then, there have been many improved local graph
clustering algorithms
developed~\cite{Andersen2006,Andersen2009,Gharan2012,Kwok2012}, which
we review in Section~\ref{sec:alg}.  Local graph clustering algorithms
have been used in many real-world applications. For example, Andersen
and Lang~\cite{AndersenL06} use a variant of the algorithm of Spielman
and Teng~\cite{Spielman2004} to identify communities in
networks. Leskovec et al.~\cite{LeskovecLDM09,Leskovec2010} and Jeub
et al.~\cite{Jeub15} use the algorithm of Andersen et
al.~\cite{Andersen2006} as well as other graph clustering algorithms
to study the properties of clusters of different sizes in social and
Web graphs.  A major conclusion
of~\cite{LeskovecLDM09,Leskovec2010,Jeub15} was that large social and
information networks typically have good small clusters as opposed to
large clusters, thus indicating that local algorithms are useful not
only for computational efficiency, but also for identifying clusters
that are more meaningful or useful in practice.  Mahoney et
al.~\cite{MahoneyOV12} and Maji et al.~\cite{MajiVM11} use local
algorithms to obtain cuts for image segmentation and community
detection. These algorithms have also been applied to find communities
in protein networks~\cite{Voevodski2009,Liao09}. 
There have been many other
papers applying local algorithms to community detection,
e.g.,~\cite{Andersen2012,Wu2015,Kloster2014,Kloumann2014,Whang2013,Yang2013,Gleich2012}.

Existing clustering algorithms with local running times are described for the sequential setting,
which is not surprising
since meaningful local clusters in the small to medium-sized graphs
studied in the past tend to be very
small~\cite{LeskovecLDM09,Jeub15}, and hence algorithms to find these
local clusters terminate very quickly. 
Even for these small to
medium-sized graphs, these local algorithms have proven to be very
useful; and currently the applicability of these methods to extremely
large graphs, e.g., those with one billion or more vertices or edges, is
limited by large-scale implementations.  Moreover, with
the massive graphs that are available today, one would like to test
the hypothesis that meaningful local clusters can be larger, and this
will lead to increased running times of local clustering
algorithms. The efficiency of these algorithms can be improved via parallelization.

A
straightforward way to use parallelism is to run many local graph
computations independently in parallel, and this can be useful for
certain applications. However, since all of the local algorithms have
many input parameters that affect both the cluster quality and
computation time, it may be hard to know a priori how to set the input
parameters for the multiple independent computations. We believe that
these local algorithms are more useful in an \emph{interactive setting},
where a data analyst wants to quickly explore the properties of local
clusters found in a graph. In such a setting, an analyst would run a
computation, study the result, and based on that determine what
computation to run next. Furthermore, the analyst may want to
repeatedly remove local clusters from a graph for his or her
application.  To keep response times low, it is important that a
single local computation be made efficient. 
If each run of the algorithm returns nearly
instantaneously rather than in tens of seconds to minutes, this 
drastically improves user experience as well as productivity.
The goal of this paper is
to achieve this via parallelism.

\emph{This paper develops parallel versions of several local
  graph clustering algorithms}---the Nibble algorithm of Spielman and
Teng~\cite{Spielman2004,SpielmanT13}, the PageRank-Nibble algorithm of Andersen et
al.~\cite{Andersen2006}, the deterministic heat kernel PageRank
algorithm of Kloster and Gleich~\cite{Kloster2014}, and the randomized
heat kernel PageRank algorithm of Chung and
Simpson~\cite{Chung2015}. These algorithms all diffuse probability
mass from a seed vertex, and return an approximate PageRank (probability) vector. 
The
vector returned at the end is then processed by a sweep cut procedure to generate a graph
partition (the sweep cut sorts the vertices in non-increasing order of degree-weighted
probability and returns the best partition among all prefixes
of the ordering). 
The approach that we take to parallelizing the diffusion process of these algorithms is to iteratively
process subsets of vertices and their edges until a termination criteria is met. Each iteration processes a possibly different subset of vertices/edges (determined from the previous iteration) in parallel.
We develop an efficient parallel algorithm for
performing the sweep cut as well.  
All of our parallel algorithms return
clusters with the same quality guarantees as their sequential
counterparts. In addition, we prove theoretical bounds on the
computational complexity of the parallel algorithms, showing that
their asymptotic work matches those of the corresponding sequential
algorithms (and thus have local running times) and that most of them have good parallelism.
The only other work on parallelizing local graph clustering algorithms
that we are aware of is a parallelization of the PageRank-Nibble
algorithm of Andersen et al.~\cite{Andersen2006} in the distributed setting by Perozzi et
al.~\cite{Perozzi2014}. However, their algorithm does not have a local
running time since it does work proportional to at least the number of vertices in the graph.

We implement all of our parallel algorithms in the Ligra graph
processing framework for shared-memory~\cite{ShunB2013}. Ligra is
well-suited for these applications because it only does work
proportional to the number of active vertices (and their edges) in
each iteration, which enables local implementations of the
algorithms. In contrast, many other systems (e.g.,
GraphLab~\cite{Powergraph} and Pregel~\cite{Pregel}) require touching
all of the vertices in the graph on each iteration, which would lead to inefficient
implementations of local algorithms.  Our implementations are all lock-free, and use only
basic parallel primitives such as prefix sums, filter, and sorting, as
well as data-parallel functions in Ligra that map computations over
subsets of vertices or edges.

We present a comprehensive experimental evaluation of our parallel
algorithms, and compare them to their sequential counterparts.  Our
experiments on a modern 40-core machine show that our parallel
implementations of the four diffusion methods achieve good speedups
with increasing core count.  For PageRank-Nibble, we describe an
optimization that speeds up both the sequential and the parallel
algorithms.  Our parallel sweep cut procedure also
performs very well in practice, achieving 23--28x speedup on 40 cores.
Due to the efficiency of our algorithms, we are able to
generate network community profile plots (a concept introduced
in~\cite{LeskovecLDM09} and used in~\cite{Jeub15} that quantifies the
best cluster as a function of cluster size) for some of the
largest publicly-available real-world graphs.

\section{Preliminaries}\label{sec:prelims}
\myparagraph{Graph Notation} We denote a graph by $G(V, E)$, where $V$
is the set of vertices and $E$ is the set of edges in the graph. All
of the graphs that we study in this paper are \emph{undirected} and
\emph{unweighted}.
The number of vertices in a graph is $n=|V|$, and the number of
undirected edges is $m=|E|$.  The vertices are assumed to be indexed
from $0$ to $n-1$.  We define $d(v)$ to be the degree of a vertex $v$
(i.e., the number of edges incident on $v$). We define the
\defn{volume} of a set of vertices $S$ to be $\vol(S) = \sum_{v\in
  S}d(v)$, and the \defn{boundary} of $S$ to be $\partial(S) = \{
(x,y) \in E ~\vert ~ x \in S, y \not\in S\}$ (the number of edges
leaving a set).  The \defn{conductance} of a cluster $S$ in a graph is
defined to be $\phi(S) = \big| \partial(S) \big|
/\min(\vol(S),2m-\vol(S))$. This is a widely-used metric to measure
cluster quality. Intuitively, low-conductance vertex sets tend to
correspond to higher-quality clusters because these sets are larger
and have fewer edges to vertices outside of the
set. Figure~\ref{fig:conductance} shows an example graph and the
conductance of several clusters in the graph.

\begin{figure}[!t]
\centering
\includegraphics[width=0.6\columnwidth]{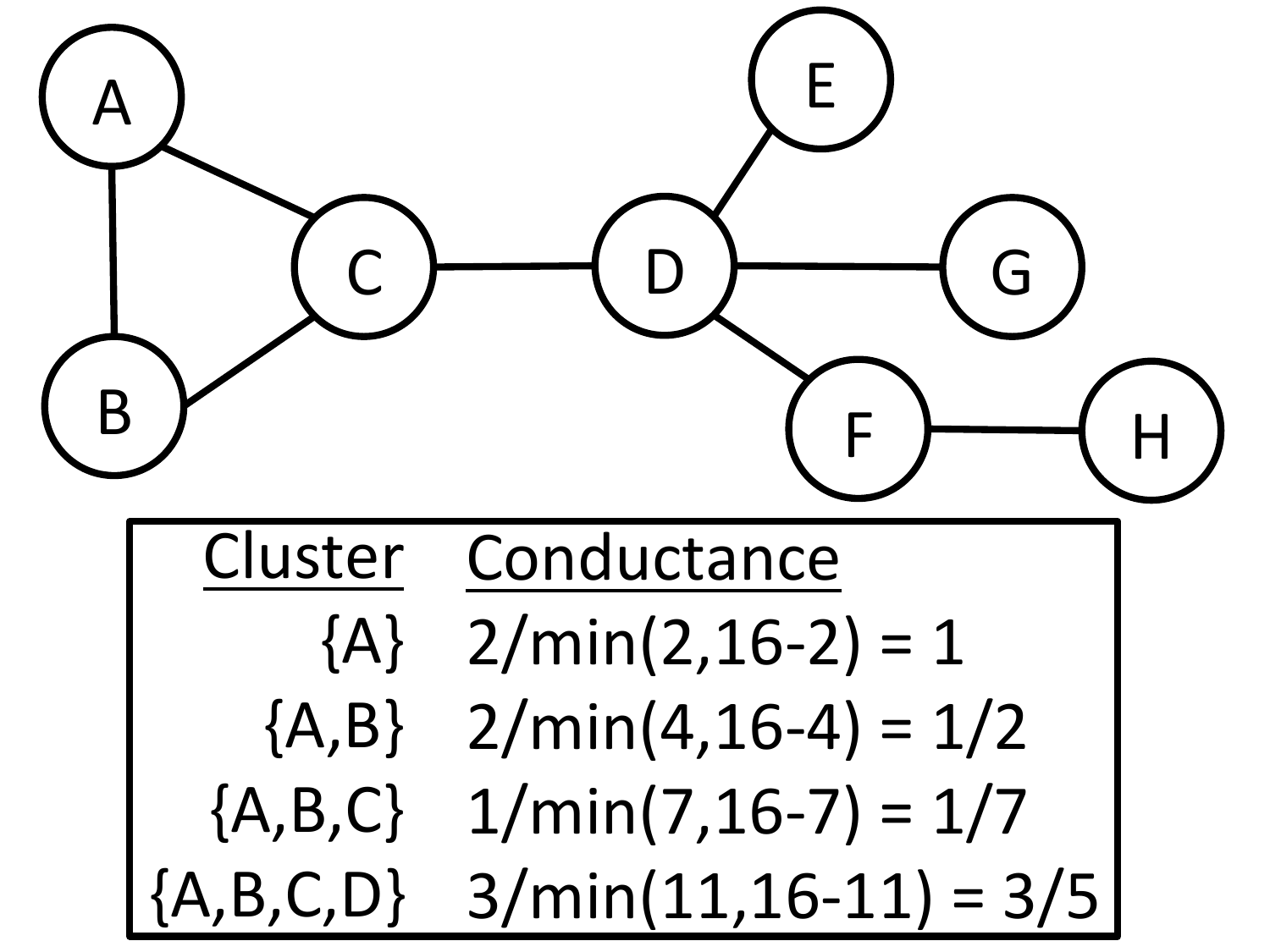}
\caption{An example graph with $n=8$ and $m=8$. The conductance of several clusters are shown.}
\label{fig:conductance}
\end{figure}

\myparagraph{Local Algorithms} \defn{Local graph clustering
  algorithms} have the guarantee that if there exists a cluster $S$
with conductance $\phi$, and one picks a starting vertex in $S$ then
the algorithm returns a cluster of conductance $f(\phi,n)$ with
constant probability.\footnote{$f(\phi,n)$ is a function of $\phi$ and
  $n$ that determines the approximation guarantee of the algorithm. We
  will plug in specific values when we discuss the individual
  algorithms.} They also have a work bound (number of operations) that
depends linearly on the size of the cluster and at most
polylogarithmically on the graph size.
Local algorithms have been used in theory as subroutines to obtain
nearly linear time graph partitioning
algorithms~\cite{Andersen2006,Andersen2007,Andersen2009,SpielmanT13}, which in turn
have applications in solving sparse linear
systems~\cite{Spielman2004}.

\myparagraph{Atomic Operations} A \defn{compare-and-swap} is an atomic
instruction supported on modern multicore machines that takes three
arguments---a memory location, an expected value, and a new value; if the
value stored at the location is equal to the expected value then it atomically
updates the location with the new value and returns true, and
otherwise returns false.  A \defn{fetch-and-add (\fetchAdd{})} takes
two arguments, a location $x$ and a value $y$, and atomically adds $y$
to the value at location $x$. It can be implemented using a loop with
a compare-and-swap until the update is successful.  In the
paper, we use the notation $\mbox{\&}x$ to denote the memory location
of variable $x$.

\myparagraph{Parallel Model}
Algorithms in this paper are analyzed in the work-depth
model~\cite{JaJa92,CLRS}, where \defn{work} is equal to the number of
operations required (equivalent to sequential running time)
and \defn{depth} is equal to the number of time steps required (the
longest chain of sequential dependencies).  The work-depth model is a
natural shared-memory parallel extension of the commonly-used
sequential RAM model~\cite{CLRS} and has been used for decades to
develop parallel algorithms (see, e.g.,~\cite{JaJa92,CLRS}).

By Brent's theorem~\cite{brent1974parallel}, an algorithm with work
$W$ and depth $D$ has overall running time $W/P + D$, where $P$ is the
number of processors available.  Concurrent reads and writes are
allowed in the model, with which a compare-and-swap can be
simulated. A fetch-and-add can be simulated in linear work and
logarithmic depth in the number of updates. A \defn{work-efficient}
parallel algorithm is one whose work asymptotically matches that of
the sequential algorithm, which is important since in practice the
$W/P$ term in the running time often dominates.

\myparagraph{Parallel Primitives} We will use the basic parallel
primitives, prefix sum and filter~\cite{JaJa92}. \defn{Prefix sum}
takes an array $X$ of length $N$, an associative binary operator
$\oplus$ (e.g., the addition operator or the minimum
operator),
and returns the array $(X[0], X[0] \oplus X[1], \ldots,X[0] \oplus
X[1] \oplus \ldots \oplus X[N-1])$.\footnote{This
  definition is for the inclusive version of prefix sum, in contrast to the exclusive
  version of prefix sum, commonly used in parallel algorithms.}
\defn{Filter} takes an array $X$ of length $N$ and a
predicate function $f$, and returns an array $X'$ of length $N' \le N$
containing the elements in $x \in X$ such that $f(a)$ is true, in the
same order that they appear in $X$. Filter can be implemented using
prefix sum, and both require $O(N)$ work and $O(\log N)$
depth~\cite{JaJa92}. We also use parallel comparison sorting, which
for $N$ elements can be done in $O(N\log N)$ work and $O(\log N)$ depth~\cite{JaJa92}, and parallel integer sorting, which can be done in $O(N)$ work and $O(\log N)$ depth with probability $1-1/N^{O(1)}$ (we refer to this as a \emph{high
  probability bound}) for $N$ integers in the range $[1,\ldots,O(N\log^{O(1)}N)]$~\cite{RR89}.

\myparagraph{Sparse Sets} Our implementations use hash tables to
represent a sparse set to store data associated with the vertices
touched in the graph.  This is because we can only afford to do work
proportional to those vertices (and their edges), and cannot
initialize an array of size $|V|$ at the beginning. For sequential
implementations we use the \texttt{unordered\_map} data structure in
STL. For parallel implementations, we use the non-deterministic
concurrent hash table described in~\cite{ShunB14}, which allows for
insertions and searches in parallel. The hash table is a lock-free
table based on linear probing, and makes heavy use of
compare-and-swap and fetch-and-add.  For a batch of $N$ inserts and/or
searches, parallel hashing takes $O(N)$ work and $O(\log N)$ depth
with high probability~\cite{Gil91a}. We set the size of the hash
tables to be proportional to the number of the elements $N$ that we
need to store, so that it can be initialized in $O(N)$ work and
$O(1)$ depth. All of the parallel
algorithms that we present use sparse sets, and so their complexity bounds will
be high probability bounds.

In our pseudocode, we use sparse sets to store key-value pairs where
the key is the vertex ID and the value is the associated data. We use
the notation $\p[k]$ to denote the value in the sparse set $\p$
associated with the key $k$. If we attempt to update data for a
non-existent key $k$ in the sparse set, we assume that prior to
updating, a pair $(k,\bot)$ will be created in the set, where $\bot$
is a zero element defined when creating the set. Both STL's
\texttt{unordered\_map} and the concurrent hash table that we use support this
functionality.
For all of our implementations, $\bot = 0$.

\newcommand{\id}[1]{\emph{\mbox{#1}}}

\myparagraph{Ligra Framework} Our parallel implementations are written
using Ligra, a graph processing framework for shared-memory 
machines~\cite{ShunB2013}. Ligra is very well-suited for implementing
local algorithms since it only does work proportional to the vertices
and edges touched, whereas many other graph processing systems (e.g.,
GraphLab~\cite{Powergraph} and Pregel~\cite{Pregel}) do work
proportional to at least the number of vertices in the graph on every iteration. This feature of
Ligra is crucial in obtaining running times proportional to just the
number of vertices and edges explored. Implementations in Ligra have
been shown to be simple and concise, with performance close to
hand-optimized implementations of the algorithms.
We chose to implement the graph algorithms in shared-memory because
the largest publicly-available real-world graphs can fit in the memory
of a single machine, and shared-memory graph processing has been shown
to be much more efficient than their distributed-memory
counterparts~\cite{ShunB2013,McSherry2015}.

\begin{figure}[!t]
\centering
\includegraphics[width=0.6\columnwidth]{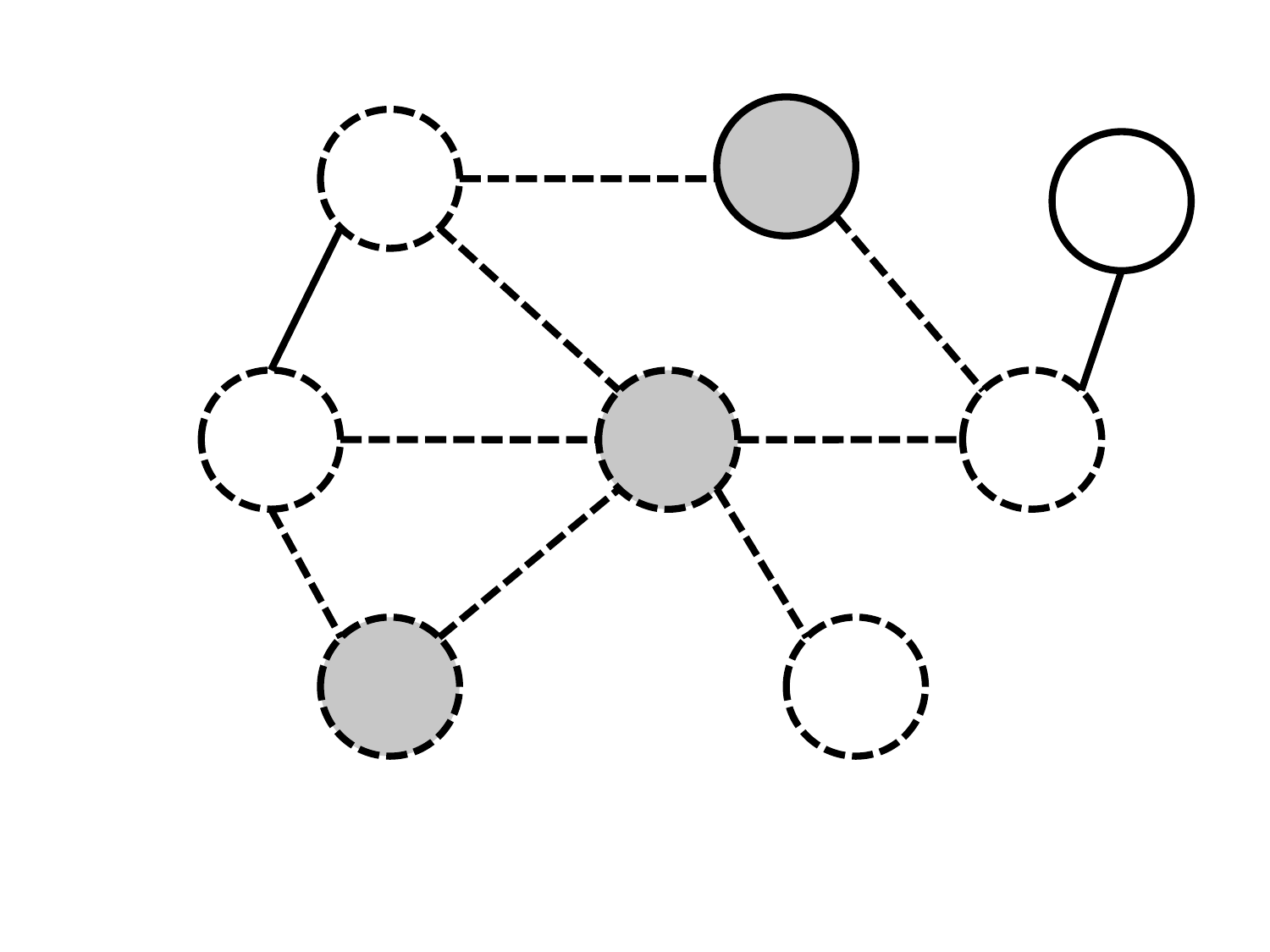}
\caption{An example graph where the shaded vertices are in the \vset{}
  $U$. A \vmap{} applied to $U$ applies a function to data associated
  with the shaded vertices. An \emap{} applied to $U$ applies a
  function to all of the edges incident to $U$ (indicated by dashed
  lines), and can modify the neighbors of $U$ (indicated by dashed
  circles).}
\label{fig:graph}
\end{figure}

Ligra provides a \defn{\vset{}} data structure used for representing a
subset of the vertices, and two simple functions, one for mapping over
vertices and one for mapping over edges. We describe simplified
versions of these functions, which suffices for implementing the
algorithms in this paper (see~\cite{ShunB2013} for the more general
versions). \defn{\vmap{}} takes as input a \vset{} $U$ and a boolean
function $F$, and applies $F$ to all vertices in $U$.
$F$ can side-effect data structures associated with the vertices.
\defn{\emap{}} takes as input a graph $G(V,E)$, \vset{} $U$, and
boolean update function $F$, and applies $F$ to all edges $(u,v)\in
E$ where $u\in U$.
Again, $F$ can side-effect data structures associated with the
vertices.  The programmer ensures the parallel correctness of the
functions passed to \vmap{} and \emap{} by using atomic operations
where necessary. An example graph is shown in Figure~\ref{fig:graph},
where the shaded vertices are in a \vset{}; \vmap{} applies a function
to data associated with the shaded vertices in parallel, and \emap{} applies a
function to the incident edges (dashed lines) and neighbors (dashed
circles) of the shaded vertices in parallel. Note that in some cases multiple
shaded vertices have edges to the same neighbor, so the function must
be correct when run in parallel.

\emap{} is implemented by doing work proportional to the number of vertices in
its input \vset{} and the sum of their outgoing degrees, and processes the vertices
and all of their edges in parallel.  \vmap{} is implemented by doing
work proportional to the number of vertices in its input \vset{}, and
processes all of the vertices in parallel.  Doing work proportional to
only the size of the input \vset{} and its edges makes Ligra efficient
for implementing local graph algorithms that only need to touch part
of the graph.  The Ligra code compiles with either Cilk
Plus or OpenMP for parallelism. We refer the reader to~\cite{ShunB2013} for more
implementation details.

\section{Parallel Algorithms}\label{sec:paralg}
In this section, we review sequential local clustering algorithms and
show how to parallelize them. We describe our parallel algorithms
without any specific setting of parameters, which in prior literature
are often set to specific values for theoretical
purposes. Additionally, we assume the seed set contains just a single
vertex, although all of the algorithms extend to seed sets with
multiple vertices.  

At a high-level, our clustering algorithms are based on iteratively
processing subsets of vertices and their edges in parallel until a
termination criteria is met. For most of the algorithms, we use the
data-parallel \vmap{} and \emap{} functions in Ligra to process each
subset.  One challenge in parallelizing the algorithms is in
guaranteeing a local running time.  To address this challenge, we 
ensure that each iteration only does work proportional to the size of
the subset of vertices and their edges through a careful
representation of sparse sets as well as formulating the algorithms to
use Ligra's functions, which are local when used appropriately. The
second challenge is in identifying which sets of vertices can be
processed in parallel while guaranteeing work-efficiency and
convergence. This requires additional effort for some of the
algorithms.

All of our clustering algorithms compute a vector $\p$, which is
passed to a sweep cut rounding procedure to generate a
cluster. Thus, we first describe the sweep cut procedure and how to
parallelize it in Section~\ref{sec:parSweep}, and then describe our
main clustering routines in
Sections~\ref{sec:nibble}--\ref{sec:rand-hkpr}.

\subsection{Sweep Cut}\label{sec:parSweep}
Often the solution vector $\p$ obtained by spectral partitioning
algorithms contains real numbers instead of being a binary vector that
represents a partition of the graph. Therefore, spectral algorithms
are combined with a rounding procedure which produces a partition from
$\p$ that guarantees a good worst-case approximation to the
combinatorial minimum conductance problem.

The sweep cut procedure is commonly used, and takes as input
a graph $G$ and a vector $\p$ (represented as a sparse set in a local
implementation). It first takes the vertices $v$ with non-zero values
in $\p$ and sorts them in non-increasing order of $\p[v]/d(v)$. This
gives an ordered set $\{v_1,\ldots,v_N\}$, where $N$ is the number of
non-zeros in $\p$, $\p[v_i] > 0$, and $\p[v_i]/d(v_i) \ge
\p[v_{i+1}]/d(v_{i+1})$ for all $i$. It then computes the conductance
of clusters defined by $S_j = \{v_1,\ldots,v_j\}$ for $1 \le j \le N$
and returns the set with smallest conductance. For example, in the
graph in Figure~\ref{fig:conductance} if the ordered set is
$\{A,B,C,D\}$, then the output set of the sweep cut procedure would be
$\{A,B,C\}$ since it has the lowest conductance among the four sets
considered.

The sequential algorithm for the sweep cut first sorts the vertices,
and then iterates through the vertices $v_i$ in increasing order of
$i$, inserts $v_i$ into a set $S$, maintaining the volume $\vol(S)$
and the number of outgoing edges $\partial(S)$ in each iteration. This
also allows the conductance to be computed in each iteration. The
lowest conductance as well as the iteration number $i^*$ that leads to
the lowest conductance is stored, and the final set returned is
$S_{i^*}$. If $S$ is represented as a sparse set, we can check in
constant work if an endpoint of an edge is in $S$. Thus, $\partial(S)$
can be easily updated in each iteration as follows: for each edge
$(v_i,w) \in E$, if $w\in S$ then decrement $\partial(S)$, and
otherwise increment $\partial(S)$. $\vol(S)$ is easily updated by
incrementing it by $d(v_i)$. The sorting costs $O(N\log N)$ work, and
the subsequent iterations costs $O(\vol(S_N))$ work, giving an overall
work of $O(N\log N + \vol(S_N))$.

We now show that the sweep cut procedure can be parallelized
work-efficiently. The challenging part is in computing the conductance of
all of the sets $S_i$ in parallel without increasing the asymptotic
work. A naive approach would be to form all sets $S_i$ for
$1\le i \le N$, and compute $\partial(S_i)$ and $\vol(S_i)$ for each
one independently. However, this leads to $O(N\log N+\sum_{i=1}^N\vol(S_i)) =
O(N\log N+N\vol(S_N))$ work. The following theorem describes a work-efficient
solution, and we illustrate the algorithm with an example afterward.

\begin{theorem}\label{thm:sweep}
A sweep cut can be implemented in $O(N\log N + \vol(S_N))$
work and $O(\log\vol(S_N))$ depth with high probability.
\end{theorem}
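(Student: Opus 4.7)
The plan is to parallelize the sequential sweep cut so that the only terms in the cost are sorting and a single pass that touches each incident edge exactly once. First I would sort the vertices in the support of $\p$ by non-increasing $\p[v]/d(v)$ using parallel comparison sort, which costs $O(N\log N)$ work and $O(\log N)$ depth. After sorting I would write each vertex's rank back into the sparse set storing $\p$, so that for any neighbor $w$ encountered later I can look up $\text{rank}(w)$ (or detect that $w$ is outside the support) in $O(1)$ work with high probability using the concurrent hash table described in the preliminaries. The volume prefix $\vol(S_j)$ is then just the inclusive prefix sum of the degree sequence $d(v_1),\ldots,d(v_N)$, which costs $O(N)$ work and $O(\log N)$ depth.

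The main idea, and the step that needs care, is computing $\partial(S_j)$ for all $j$ simultaneously in $O(\vol(S_N))$ work. I would observe that, viewing the cuts incrementally, for each edge $(u,w)$ with $\text{rank}(u) < \text{rank}(w)$ (where vertices outside the support have rank $+\infty$), the edge contributes $+1$ to $\partial(S_j)$ for $\text{rank}(u)\le j<\text{rank}(w)$ and $0$ otherwise. This means that at each position $j$ in the sorted order, the change $\partial(S_j)-\partial(S_{j-1})$ equals (number of edges from $v_j$ to neighbors that are either outside the support or have rank $>j$) minus (number of edges from $v_j$ to neighbors in the support with rank $<j$). For each $j$, these two counts are parallel reductions over the adjacency list of $v_j$, each costing $O(d(v_j))$ work and $O(\log d(v_j))$ depth. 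Running all vertices in parallel gives $O(\sum_j d(v_j))=O(\vol(S_N))$ total work and $O(\log \vol(S_N))$ depth, since the maximum degree is at most $\vol(S_N)$. A single prefix sum over the length-$N$ array of per-position differences then yields $\partial(S_j)$ for every $j$ in $O(N)$ work and $O(\log N)$ depth.

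Given the arrays $\vol(S_j)$ and $\partial(S_j)$, the conductance $\phi(S_j)=|\partial(S_j)|/\min(\vol(S_j),2m-\vol(S_j))$ can be computed independently at each $j$ in $O(1)$ work, and the argmin is found with a parallel reduction in $O(N)$ work and $O(\log N)$ depth. Summing the stages gives $O(N\log N+\vol(S_N))$ total work and $O(\log \vol(S_N))$ depth, with the high-probability qualifier inherited from the hash-table lookups used when classifying each edge's endpoint.

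The main obstacle is avoiding contention when aggregating each edge's contribution to $\partial$; the naive fix of having every edge atomically issue a $+1$ event at its earlier rank and a $-1$ event at its later rank would force either a sort of events or fetch-and-adds that inflate the depth. Organizing the work per position $j$ as two independent reductions over the adjacency list of $v_j$ (which can be executed in parallel across all $j$) is what keeps the algorithm both work-efficient and logarithmic in depth, and it relies crucially on the ability to store and retrieve each neighbor's rank through the sparse-set hash table.
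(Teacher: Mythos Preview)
Your argument is correct and achieves the stated bounds, but it takes a different route from the paper's proof. Both proofs agree on the initial comparison sort, the rank hash table, and the prefix sum for $\vol(S_j)$. For computing all $\partial(S_j)$, you compute the increment $\partial(S_j)-\partial(S_{j-1})$ at each position $j$ by two parallel reductions over the adjacency list of $v_j$ (counting neighbors with rank $>j$ versus rank $<j$), then take a prefix sum over the $N$ increments. The paper instead does exactly the ``$+1/-1$ event'' trick you dismiss in your final paragraph: it materializes an array $Z$ of length $2\vol(S_N)$ with a $(+1,\text{rank}(u))$ and $(-1,\text{rank}(w))$ pair for each edge with $\text{rank}(u)<\text{rank}(w)$, integer-sorts $Z$ by the rank key, and prefix-sums the first coordinates so that $\partial(S_j)$ can be read off at each rank boundary.

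Your worry that sorting the events would ``inflate the depth'' is unfounded: because the ranks lie in $[1,N+1]$, parallel integer sort handles the $O(\vol(S_N))$ events in $O(\vol(S_N))$ work and $O(\log\vol(S_N))$ depth with high probability, so the event-based method matches your bounds. What your approach buys is simplicity: you avoid integer sort entirely and use only hash lookups, per-vertex reductions, and prefix sums. What the paper's approach buys is that the entire $\partial$ computation becomes a single sort followed by a single scan over one contiguous array, which can be friendlier to implement and to memory access in practice. Both are work-efficient and logarithmic-depth, and either would be acceptable as a proof of the theorem.
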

\begin{proof}
The initial sort can be parallelized in $O(N\log N)$ work and $O(\log
N)$ depth~\cite{JaJa92}. We then create a sparse set, called $\rank$,
indexed by the vertex identifiers, storing their rank in the sorted
set $S_N=\{v_1,\ldots,v_N\}$.  We create an array $Z$ of size
$2\vol(S_N)$, and for each vertex $v \in S_N$, we look at all edges
$(v,w)\in E$ and if $\rank[w] > \rank[v]$ (case (a)) we create two
pairs $(1,\rank[v])$ and $(-1,\rank[w])$ in the two positions
corresponding to $(v,w)$ in $Z$, and otherwise (case (b)) we create
two pairs $(0,\rank[v])$ and $(0,\rank[w])$. In either case, ranks can
be looked up in $O(1)$ work and if $w \not\in S_N$, we give it a
default rank of $N+1$.  The offsets into $Z$ can be obtained via a
prefix sums computation over the degrees of vertices in $S_N$ in
sorted order. This also allows us to obtain $\vol(S_i)$ for each $i$.
We then sort $Z$ by increasing order of the second value of the pairs
(the order among pairs with equal second value does not matter). Next,
we apply a prefix sums over the sorted $Z$ with the addition operator
on the first value of the pairs.

Since $Z$ is sorted in increasing order of rank, the
final prefix sums gives the number of crossing edges for each
possible cut.  $\partial(S_i)$ for each possible $S_i$ can be obtained
by looking at entries $Z[j]$ where the second value of the pair $Z[j]$
is $i$ and the second value of the pair $Z[j+1]$ is $i+1$. The first
value of $Z[j]$ stores the number of crossing edges for the set $S_i$.
This is because case (a) corresponds to the fact that $v$ is before
$w$ in the ordering---so if they are not on the same side of the cut,
the edge $(v,w)$ will contribute $1$ to the prefix sums of the cut at
any $u$ where $\rank(v) < \rank(u) < \rank(w)$; and if they are on the
same side of the cut, both the $1$ and the $-1$ entries of the edge
will cancel out in the prefix sum, leading to an overall contribution
of $0$ for that edge. Case (b) corresponds to a duplicate edge, and so
does not contribute to the number of crossing edges.

Since we have the volume of all possible sets $S_i$, we can compute
the conductance of each possible cut. A prefix sums using the minimum
operator over the $N$ conductance values gives the cut with the
lowest conductance.

The prefix sums used in the computation contribute $O(\vol(S_N))$ to
the work, and $O(\log \vol(S_N))$ to the depth. Sorting $Z$ by the
rank of vertices using a parallel integer sort takes $O(\vol(S_N))$
work and $O(\log \vol(S_N))$ depth with high probability
since the maximum rank is $N+1 = O(\vol(S_N))$.
Creating the sparse set $\rank$ takes $O(N)$ work and
$O(\log N)$ depth using a hash table.  Including the cost of the
initial sort gives the bounds of the theorem.
\end{proof}

\newcommand{\zs}{Z_{\scriptsize \mbox{sorted}}}
\newcommand{\zss}{Z_{\scriptsize \mbox{sorted, summed}}}

\myparagraph{Example}
We now illustrate the parallel sweep cut algorithm described in
Theorem~\ref{thm:sweep} with an example. Again, let us consider the
example graph in Figure~\ref{fig:conductance}, and the set $\{A,B,C,D\}$,
which we assume has already been sorted in non-increasing order of
$\p[v]/d(v)$. The sparse set $\rank$ stores the following
mapping: $\rank = [A \to 1, B \to 2, C \to 3, D \to 4]$.
To obtain the volume of each of the four possible sets, we
first create an array with the degree of the vertices ordered by
rank, and then apply a prefix sums over it. In the example, the
array of degrees is $[2,2,3,4]$, and the result of the
prefix sums is $[2,4,7,11]$.
The array $Z$
is of size twice the volume of the set $\{A,B,C,D\}$, which is $22$. The
entries of $Z$ are shown below:

{
\begin{tabular}{l}
$Z = [(1,1),(-1,2),(1,1),(-1,3),$ \\
$(0,2),(0,1),(1,2),(-1,3),$ \\
$(0,3),(0,1),(0,3),(0,2),(1,3),(-1,4),$ \\
$(0,4),(0,3),(1,4),(-1,5),(1,4),(-1,5),(1,4),(-1,5)]$\\
\end{tabular}
}

For clarity, we have placed the pairs for each vertex on separate
rows. As an example, let us consider the entries for vertex $B$, which
are on the second row. The first edge of $B$ is to $A$, and the ranks
of $B$ and $A$ are $2$ and $1$, respectively. Since $\rank(A) <
\rank(B)$, we are in case (b) and create the pairs $(0,2)$ and
$(0,1)$. The second edge of $B$ is to $C$, whose rank is $3$. Since
$\rank(C) > \rank(B)$, we are in case (a) and create the pairs $(1,2)$
and $(-1,3)$. The pairs for the other vertices are constructed in a
similar fashion (note that the rank of vertices not in the input set is $5$).

Next, we sort $Z$ by increasing order of the second value of the
pairs. The sorted array $\zs$ is shown below, where pairs with the
same rank (second entry) are on the same row:

{
\begin{tabular}{l}
$\zs = [(1,1),(1,1),(0,1),(0,1),$\\
$(-1,2),(0,2),(1,2),(0,2),$\\
$(-1,3),(-1,3),(0,3),(0,3),(1,3),(0,3),$\\
$(-1,4),(0,4),(1,4),(1,4),(1,4),$\\
$(-1,5),(-1,5),(-1,5)]$\\
\end{tabular}
}

Next we apply a prefix sum over the first value of the pairs in
$\zs$. The resulting array $\zss$ is shown below:

{
\begin{tabular}{l}
$\zss = [(1,1),(2,1),(2,1),(2,1),$\\
$(1,2),(1,2),(2,2),(2,2),$\\
$(1,3),(0,3),(0,3),(0,3),(1,3),(1,3),$\\
$(0,4),(0,4),(1,4),(2,4),(3,4),$\\
$(2,5),(1,5),(0,5)]$\\
\end{tabular}}

With $\zss$, we can obtain the number of crossing edges for each of the
sets. The number of crossing edges for the set $\{A\}$
is found by looking at the pair in $\zss$ whose second value is
$\rank[A] = 1$ and whose next pair's second value is
$\rank[A]+1=2$. This is the $4$'th entry of the array, $(2,1)$, and
the first entry of the pair is the number of crossing edges, which is
$2$. In general the relevant pair containing the number of crossing
edges is the last pair with a given second value, and these pairs all
appear as the last pair of a row of $\zss$ shown above.
We find the number of crossing edges for the set
$\{A,B\}$ to be $2$, $\{A,B,C\}$ to be $1$, and $\{A,B,C,D\}$ to be $3$.

With the number of crossing edges and the volume of each possible set,
we can compute the conductance of each possible set, and take the one
with the lowest conductance using prefix sums. In the example, the
conductance of each set is shown in Figure~\ref{fig:conductance}, and
the set with the lowest conductance is $\{A,B,C\}$.

\subsection{Nibble}\label{sec:nibble}
Spielman and Teng~\cite{Spielman2004} present the first local algorithm
for graph clustering, called \defn{Nibble}, which gives an
approximation guarantee of 
$f(\phi,n) = O(\phi^{1/3}\log^{2/3}n)$ and requires
  $O(|S|\polylog(n)/\phi^{5/3})$ work.\footnote{We use the notation $\polylog(n)$ to mean $\log^{O(1)}n$.} It was later improved to
$f(\phi,n) = O(\sqrt{\phi\log^{3}n})$ 
with $O(|S|\polylog(n)/\phi^2)$ work~\cite{SpielmanT13}.
Their algorithm is based on computing the
distribution of a random walk starting at the seed vertex, but at each
step truncating small probabilities to zero. The sweep cut procedure is
then applied on the distribution to give a partition. 

The Nibble algorithm~\cite{SpielmanT13} takes as input the maximum
number of iterations $T$, an error parameter $\epsilon$, a target
conductance $\phi$, and a seed vertex $x$.
On each iteration (for up to $T$ iterations) Nibble computes a weight
vector, computes a sweep cut on the vector, and either returns the
cluster if its conductance is below $\phi$ or continues running.  Let
$\mathbf{p_0}$ be the initial weight vector, and $\mathbf{p_i}$ denote
the vector on iteration $i$.  $\mathbf{p_0}$ is initialized with
weight $1$ on the seed vertex and $0$ everywhere else. $\mathbf{p_i}$
represents the weights generated by a lazy random walk with truncation from the seed
vertex after $i$ steps, where on each step $\mathbf{p_i}(v)$ is
truncated to $0$ if $\mathbf{p_i}(v) < d(v)\epsilon$.  Computing
$\mathbf{p_{i+1}}$ from $\mathbf{p_i}$ simply requires truncating all
$p_i(v) < d(v)\epsilon$ to $0$ and for each remaining non-zero entry
$\mathbf{p_i}(v)$, sending half of its mass to $\mathbf{p_{i+1}}(v)$
and the remaining to its neighbors, evenly distributed among them.

In practice, computing the sweep cut on each iteration of the
algorithm is unnecessary if one does not know what target conductance
is desirable for the particular graph. As such, we modify the Nibble
algorithm so that it runs for $T$ iterations, returning
$\mathbf{p_T}$, unless there are no vertices on some iteration $i$
such that $\mathbf{p_i}(v) \ge d(v)\epsilon$, in which case we return
$\mathbf{p_{i-1}}$.

Our implementations use sparse sets to represent the $\p$ vector of
the current and previous iteration (vectors from prior iterations can
be safely discarded), so that the work is local.  For the sequential
implementation, the update procedure simply follows the description
above.  Our parallel implementation uses a \vmap{} and \emap{} on each
iteration to update the $\p$ vector of the next iteration. The
pseudocode is shown in Figure~\ref{alg:nibble}. As discussed in
Section~\ref{sec:prelims}, we assume that the size of a sparse set is
proportional to the number of elements it represents, and furthermore
when accessing a non-existing element $v$, the entry $(v,0)$ will be
automatically created in the set.  Initially, the seed vertex is
placed on the frontier and in the weight vector $\p$ (Lines 8--9). On
each iteration the algorithm clears $\mathbf{p'}$, used to store the
next weight vector, applies a \vmap{} to send half of the mass of each
vertex to itself (Line 12), and then an \emap{} to send the remaining
half of the mass evenly among all neighbors (Line
13).
In Line 14, the frontier is updated to contain just the
vertices that have enough mass (equivalent to the thresholding
operation described in~\cite{SpielmanT13}), and only needs to check
the vertices in the frontier at the beginning of the iteration and
their neighbors.  The algorithm breaks on Line 15 if the frontier is
empty, and otherwise updates $\p$ to be the new vector $\p'$ on
Line 16.

\begin{figure}[!t]
\begin{algorithmic}[1] 
\State $\codevar{\set{}}~\p = \{\}$
\State $\codevar{\set{}}~\mathbf{p'} = \{\}$
\Procedure{UpdateNgh}{$s$, $d$} \Comment{passed to \emap{}}
\State $\fetchAdd(\text{\&}\mathbf{p'}[d],\p [s]/(2d(s)))$
\EndProcedure
\smallskip
\Procedure{UpdateSelf}{$v$} \Comment{passed to \vmap{}}
\State $\mathbf{p'}[v] = \p [v]/2$
\EndProcedure

\smallskip
\Procedure{Nibble}{$G$, $x$, $\epsilon$, $T$}
\State $\p = \{(x,1)\}$
\State $\codevar{\vset{}}~\codevar{Frontier} = \{x\}$ \Comment{seed vertex}
\For {$t=1$ to $T$}
\State $\p' = \{\}$
\State $\vmap{}(\codevar{Frontier}, \textproc{UpdateSelf})$
\State $\emap{}(G, \codevar{Frontier}, \textproc{UpdateNgh})$
\State $\codevar{Frontier} = \{v~\vert ~\p'[v] \ge d(v)\epsilon\}$ \Comment{using filter}
\If{$(\size(\codevar{Frontier}) == 0)$} \algorithmicbreak{}
\Else $~\p = \mathbf{p'}$
\EndIf
\EndFor
\State \algorithmicreturn{} $\p$
\EndProcedure
\end{algorithmic}
\caption{Pseudocode for parallel Nibble.} \label{alg:nibble}
\end{figure}

For concreteness, if the input graph is the graph from
Figure~\ref{fig:graph}, and the shaded vertices represent the
frontier, then the \vmap{} on Line 12 would apply the function
\textproc{UpdateSelf} to the $\p$ and $\p'$ entries of the shaded
vertices, and the \emap{} on Line 13 would apply the function
\textproc{UpdateNgh} to the dashed edges, reading from the $\p$
entries of the shaded vertices and writing to the $\p'$ entries of the
dashed vertices. Line 14 would find a new set of shaded vertices
(\vset{}) to represent the next frontier.

Spielman and Teng~\cite{SpielmanT13} show that each iteration of the
algorithm processes $O(1/\epsilon)$ vertices and edges, and so takes
$O(1/\epsilon)$ work ($O((1/\epsilon)\log(1/\epsilon))$ work if
including the per-iteration sweep cut). We can parallelize the updates
across all vertices and edges in each iteration, and for $N$ updates,
the fetch-and-adds together take $O(\log N)$ depth and $O(N)$ work.
The filter performs $O(1/\epsilon)$ work since we only check the
vertices on the frontier and their neighbors.  This gives a
work-efficient parallelization of each iteration in $O(\log
(1/\epsilon))$ depth, and gives an overall depth of $O(T\log
(1/\epsilon))$.  Including the sweep cut on each round, as done
in~\cite{SpielmanT13}, does not increase the depth as it can be
parallelized in logarithmic depth as shown in Theorem~\ref{thm:sweep}.

\begin{theorem}
The parallel algorithm for Nibble requires $O(T/\epsilon)$ work
($O((T/\epsilon)\log(1/\epsilon))$ work if performing a sweep cut per
iteration) and $O(T\log(1/\epsilon))$ depth with high probability.
\end{theorem}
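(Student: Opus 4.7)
The plan is to bound the cost of a single iteration of the main loop and then multiply by $T$. The key structural fact I would invoke first, from Spielman and Teng~\cite{SpielmanT13}, is that the frontier maintained on Line 14 always has size $O(1/\epsilon)$: only vertices with $\p[v] \ge d(v)\epsilon$ survive, and since the total probability mass sums to $1$, the sum of degrees of frontier vertices is also $O(1/\epsilon)$. Thus the \vmap{} on Line 12 is applied to $O(1/\epsilon)$ vertices, and the \emap{} on Line 13 ranges over $O(1/\epsilon)$ incident edges.

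Given that size bound, the per-iteration accounting is largely routine. The \vmap{} performs $O(1/\epsilon)$ constant-work self-updates. The \emap{} performs $O(1/\epsilon)$ calls to \textproc{UpdateNgh}, each a single \fetchAdd{} into an entry of $\p'$; using the linear-work, logarithmic-depth simulation of \fetchAdd{} noted in the preliminaries, these contribute $O(1/\epsilon)$ work and $O(\log(1/\epsilon))$ depth in aggregate, even under contention at shared neighbors. The filter on Line 14 examines only the frontier and its neighbors, again $O(1/\epsilon)$ items, and runs via prefix sums in $O(1/\epsilon)$ work and $O(\log(1/\epsilon))$ depth. Initializing the fresh sparse set $\p' = \{\}$ on Line 11 costs $O(1/\epsilon)$ work and $O(1)$ depth, since we size the hash table proportionally to the elements it will store (per the convention in Section~\ref{sec:prelims}), and all sparse-set accesses used by \vmap{}, \emap{}, and the filter take $O(1)$ work and $O(\log(1/\epsilon))$ depth with high probability. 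Summing these gives $O(1/\epsilon)$ work and $O(\log(1/\epsilon))$ depth per iteration, and multiplying by $T$ yields the stated bounds.

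The main obstacle to watch for is ensuring the argument that each iteration truly touches only $O(1/\epsilon)$ vertices and edges, because if the sparse-set size were ever allowed to exceed this, the work would no longer be local; I would be careful to note that the filter on Line 14 only inspects the prior frontier together with its \emap{} neighbors, not the entire support of $\p'$ across all iterations. The high-probability qualifier then comes from the concurrent hash table and integer sort bounds in the preliminaries, each of which can miss its depth target only with probability $1/N^{O(1)}$; a union bound over the $O(T)$ iterations preserves a high-probability bound as long as $T = \polylog(n)$, which is the regime of interest.

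For the parenthetical sweep-cut bound, I would plug $N = O(1/\epsilon)$ and $\vol(S_N) = O(1/\epsilon)$ into Theorem~\ref{thm:sweep}, giving $O((1/\epsilon)\log(1/\epsilon))$ work and $O(\log(1/\epsilon))$ depth for each sweep cut. Adding this to the per-iteration cost above and multiplying by $T$ gives the $O((T/\epsilon)\log(1/\epsilon))$ work and the unchanged $O(T\log(1/\epsilon))$ depth.
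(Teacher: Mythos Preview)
Your proposal is correct and follows essentially the same approach as the paper: invoke the Spielman--Teng bound that each iteration touches $O(1/\epsilon)$ vertices and edges, account for the \vmap{}, \emap{} (via fetch-and-adds), and filter as $O(1/\epsilon)$ work and $O(\log(1/\epsilon))$ depth per iteration, multiply by $T$, and plug into Theorem~\ref{thm:sweep} for the optional sweep cut. You are in fact more careful than the paper on points such as sparse-set initialization and the union bound over iterations, but the argument is the same.
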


\subsection{PR-Nibble}\label{sec:pr-nibble}
Andersen, Chung, and Lang~\cite{Andersen2006} give an improved local
clustering algorithm, \defn{PageRank-Nibble (PR-Nibble)}, with $f(\phi,n) =
O(\sqrt{\phi\log n})$ and work $O(|S|\polylog(n)/\phi)$.
Their algorithm generates an approximate PageRank vector based on
repeatedly pushing mass from vertices that have enough
residual. Again a sweep cut is applied to the resulting vector
to give a partition. 

The PR-Nibble algorithm~\cite{Andersen2006} takes as input
an error parameter $\epsilon$, a teleportation parameter $0<\alpha<1$, and a
seed vertex $x$. The algorithm maintains two vectors, $\p$ (the PageRank
vector) and $\rv$ (the residual vector).  At the end of the algorithm,
$\p$ is returned to be used in the sweep cut procedure. Initially,
$\p$ is set to $0$ everywhere (i.e., the sparse set representing it is
empty), and $\rv$ is set to store a mass of $1$ on $x$ and $0$ everywhere
else (i.e., the sparse set contains just the entry $(x,1)$).

On each iteration, a vertex $v$ with $\rv(v) \ge d(v)\epsilon$ is chosen
to perform a \defn{push} operation.  Following the description
in~\cite{Andersen2006}, a push operation on vertex $v$ will perform
the following three steps:
\begin{enumerate}[itemsep=0pt,topsep=3pt]
\item $\p[v] = \p[v] + \alpha \rv[v]$
\item For each $w$ such that $(v,w)\in E$: \par
 \hspace{4ex} $\rv[w] = \rv[w] + (1-\alpha)\rv[v]/(2d(v))$
\item $\rv[v] = (1-\alpha)\rv[v]/2$
\end{enumerate}
The PR-Nibble simply repeatedly applies the push operation on a vertex
until no vertices satisfy the criterion $\rv(v) \ge d(v)\epsilon$, at
which point it returns $\p$ and terminates. The work of the algorithm
has been shown to be $O(1/(\alpha\epsilon))$ in~\cite{Andersen2006}.

Again, our sequential implementation simply follows the above
procedure, and uses sparse sets to represent $\p$ and $\rv$ to obtain
the local work bound. As described in~\cite{Andersen2006}, we use a
queue to store the vertices with $\rv(v) \ge d(v)\epsilon$, and
whenever we apply a push on $v$, we check if any of its neighbors
satisfy the criterion, and if so we add it to the back of the
queue. We repeatedly push from $v$ until it is below the threshold.

\myparagraph{An Optimization} We implement an optimization to speed
up the code in practice.  
In particular, we use a more
aggressive implementation of the push procedure as follows:
\begin{enumerate}[itemsep=0pt,topsep=3pt]
\item $\p[v] = \p[v] + (2\alpha/(1+\alpha)) \rv[v]$
\item For each $w$ such that $(v,w)\in E$: \par
 \hspace{4ex} $\rv[w] = \rv[w] + ((1-\alpha)/(1+\alpha))\rv[v]/d(v)$
\item $\rv[v] = 0$
\end{enumerate}
This rule can be shown to approximate the same linear system as the
original rule~\cite{Andersen2006}, and the solution generated can be
shown to have the same asymptotic conductance guarantees as the
original PR-Nibble algorithm.
The work of this
modified algorithm can be shown to also be $O(1/(\alpha\epsilon))$ by
using the same proof as in Lemma 2 of~\cite{Andersen2006} and
observing that at least $(2\alpha\epsilon/(1+\alpha)) d(v) \ge
\alpha\epsilon d(v)$ mass is pushed from $\rv$ per iteration.

Figure~\ref{fig:acl-opt} shows the normalized running times of the
original PR-Nibble algorithm versus our modified version with the optimized update rule for $\alpha=0.01$ and $\epsilon=10^{-7}$ on various input graphs.  In
the experiment, both versions return clusters with the same
conductance for the same input graph.  We see that the optimized
version always improves the running time, and by a factor of
$1.4$--$6.4$x for the graphs that we experimented with.


\begin{figure}[!t]
\centering
\vspace{-10pt}
\includegraphics[width=0.8\columnwidth]{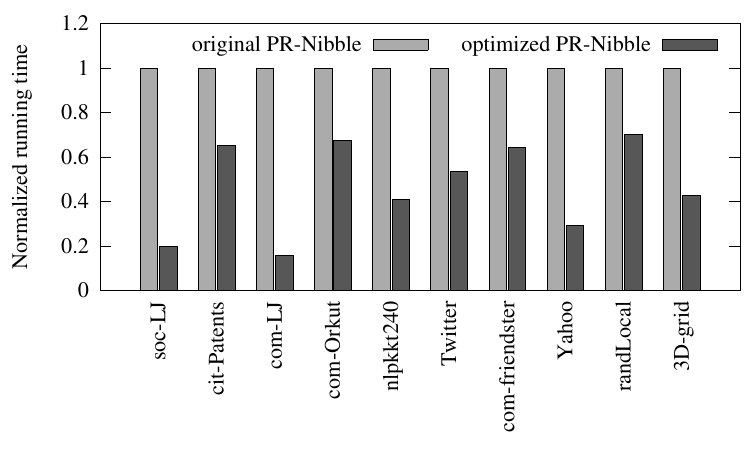}
\vspace{-4pt}
\caption{Running times of the original version of sequential PR-Nibble
  versus the optimized version with $\alpha=0.01$ and
  $\epsilon=10^{-7}$. The running times are normalized to that of
  original PR-Nibble. See Section~\ref{sec:exp} for more information about the graphs and machine specifications.}
\label{fig:acl-opt}
\vspace{-3pt}
\end{figure}


We also tried using a priority queue instead of a
regular queue to store the vertices, where the priority of a vertex
$v$ is the value of $\rv(v)/d(v)$ when it is first inserted into the
queue (with a higher value corresponding to a higher priority). We did
not find this to help much in practice, and sometimes performance was
worse due to the overheads of priority queue operations.


\myparagraph{Parallel Implementation} 
The PR-Nibble algorithm as described above is sequential because each
iteration performs a push on only one vertex. To add parallelism,
instead of selecting a single vertex in an iteration, we select
\emph{all} vertices $v$ where $\rv(v)\ge d(v)\epsilon$, and perform
pushes on them in parallel. This idea was described by Perozzi et
al.~\cite{Perozzi2014}, who implemented it for the distributed setting. Unfortunately their algorithm does not
have a local running time since it does work proportional to at least the
number of vertices in the graph. Here we develop a work-efficient parallel algorithm (and thus with local running time) in the
shared-memory setting based on this idea.

In our parallel algorithm, we
maintain an additional vector $\mathbf{r'}$, which is set to $\rv$ at
the beginning of an iteration, and during the iteration vertices read
values from $\rv$ and write values to $\mathbf{r'}$. At the end of the iteration,
$\rv$ is set to $\mathbf{r'}$. Thus, the pushes use information in the
$\rv$ vector computed from previous iterations, and do not take into
account updates that occur within the current iteration. We also tried
an asynchronous version which only maintains a single $\rv$ vector,
with updates always going to that vector, but found that 
mass would leak when running in parallel due to race conditions, and
it was unclear what the meaning of the solution at the end
was. Developing an asynchronous version that preserves mass and works
well in practice is a direction for future work.

We implement a parallel version of PR-Nibble with the original update
rule (Figure~\ref{alg:pr-nibble}) as well as a version with
the optimized update rule, which requires changing only the update
functions passed to \emap{} (Figure~\ref{alg:pr-nibble2}).  The
implementations are very similar to that of Nibble---the main differences
are in the update rules 
and the fact that PR-Nibble runs until the size of the frontier
becomes empty whereas Nibble will stop after at most $T$ iterations.
 
\begin{figure}[!t]
\vspace{-6pt}
\scriptsize
\begin{algorithmic}[1] 
\State $\codevar{\set{}}~\p = \{\}$
\State $\codevar{\set{}}~\rv = \{\}$
\State $\codevar{\set{}}~\mathbf{r'} = \{\}$
\Procedure{UpdateNgh}{$s$, $d$} \Comment{passed to \emap{}}
\State $\fetchAdd(\text{\&}\mathbf{r'}[d],(1-\alpha)\rv[s]/(2d(s)))$
\EndProcedure
\smallskip
\Procedure{UpdateSelf}{$v$} \Comment{passed to \vmap{}}
\State $\p[v] = \p[v] + \alpha \rv[v]$
\State $\mathbf{r'}[v] = (1-\alpha)\rv[v]/2$
\EndProcedure

\smallskip
\Procedure{PR-Nibble}{$G$, $x$, $\epsilon$, $\alpha$}
\State $\rv = \{(x,1)\}$
\State $\codevar{\vset{}}~\codevar{Frontier} = \{x\}$ \Comment{seed vertex}
\While {$(\size(\codevar{Frontier}) > 0)$}
\State $\vmap{}(\codevar{Frontier}, \textproc{UpdateSelf})$
\State $\emap{}(G, \codevar{Frontier}, \textproc{UpdateNgh})$
\State $\rv = \mathbf{r'}$
\State $\codevar{Frontier} = \{v~\vert ~\rv[v] \ge d(v)\epsilon\}$ \Comment{using filter}
\EndWhile
\State \algorithmicreturn{} $\p$
\EndProcedure
\end{algorithmic}
\caption{Pseudocode for parallel PR-Nibble with the original update rule.} \label{alg:pr-nibble}
\end{figure}

\begin{figure}[!t]
\scriptsize
\begin{algorithmic}[1] 
\Procedure{UpdateNgh}{$s$, $d$} \Comment{passed to \emap{}}
\State $\fetchAdd(\text{\&}\mathbf{r'}[d],((1-\alpha)/(1+\alpha))\rv[s]/d(s))$
\EndProcedure
\smallskip
\Procedure{UpdateSelf}{$v$} \Comment{passed to \vmap{}}
\State $\p[v] = \p[v] + (2\alpha/(1+\alpha))\rv[v]$
\State $\mathbf{r'}[v] = 0$
\EndProcedure
\end{algorithmic}
\caption{Update functions for the optimized version of parallel PR-Nibble. The rest of the code is the same as in Figure~\ref{alg:pr-nibble}.} \label{alg:pr-nibble2}
\vspace{-4pt}
\end{figure}

Unlike Nibble, the amount of work performed in the parallel versions
of PR-Nibble can differ from the sequential version as
the parallel version pushes from all vertices above the threshold with
their residual at the \emph{start of an iteration}.  In particular,
the sequential version selects and pushes a single vertex based
on the most recent value of $\rv$ whereas the parallel version selects
and pushes vertices based on the value of $\rv$ before any of the
vertices in the same iteration have been processed.  The residual of
the vertex when it is pushed in the parallel version can be lower than
when it is pushed in the sequential version, causing less
progress to be made towards termination, and leading to more pushes
overall. However, the following theorem shows that the asymptotic work
complexity of the parallel versions match that of the sequential
versions.

\begin{theorem}
The work of the parallel implementations of PR-Nibble using either
update rule is $O(1/(\alpha\epsilon))$ with high probability.
\end{theorem}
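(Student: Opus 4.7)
The plan is to mimic the sequential amortized argument of Andersen et al.~\cite{Andersen2006} (their Lemma~2), but account for the fact that in the parallel version each vertex on the frontier pushes using the snapshot value $\rv$ held at the start of the iteration rather than the most recently updated residual. The crucial observation is that this snapshot semantics does not affect the invariant that total mass is conserved, so the same potential argument as in the sequential case still upper-bounds the total work.

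First I would verify mass conservation. Considering a single iteration of the original-rule algorithm (Figure~\ref{alg:pr-nibble}), for each vertex $v$ on the frontier \textproc{UpdateSelf} adds $\alpha\rv[v]$ to $\p[v]$ and writes $(1-\alpha)\rv[v]/2$ to $\mathbf{r'}[v]$, while \textproc{UpdateNgh} distributes a total of $(1-\alpha)\rv[v]/2$ among the neighbors of $v$ via fetch-and-adds to $\mathbf{r'}$. Summing, each pushed vertex contributes exactly $\rv[v]$ into $\p \cup \mathbf{r'}$, and since $\rv$ is only read during the iteration, the overall $\ell_1$-sum $\|\p\|_1 + \|\rv\|_1$ is preserved (and equal to $1$ throughout, given the initial $\rv=\{(x,1)\}$). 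The analogous bookkeeping for the optimized rule of Figure~\ref{alg:pr-nibble2} shows that it too conserves mass, since $(2\alpha/(1+\alpha)) + (1-\alpha)/(1+\alpha) \cdot d(v) / d(v) = 1$ when summed over self and neighbors.

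Next I would run the potential argument. Every vertex $v$ placed on the frontier at iteration $t$ satisfies $\rv_t[v] \ge d(v)\epsilon$ at the snapshot; hence \textproc{UpdateSelf} at that iteration increases $\p[v]$ by at least $\alpha d(v)\epsilon$ under the original rule, and by at least $(2\alpha/(1+\alpha)) d(v)\epsilon \ge \alpha d(v)\epsilon$ under the optimized rule. Since $\p$ is monotonically non-decreasing and $\|\p\|_1 \le 1$ by the conservation of mass, summing across all iterations and frontiers yields
\begin{equation*}
\alpha \epsilon \sum_{t}\sum_{v \in F_t} d(v) \;\le\; \sum_{t}\sum_{v \in F_t} \alpha \rv_t[v] \;\le\; \|\p\|_1 \;\le\; 1,
\end{equation*}
so the total degree summed over all pushes is at most $1/(\alpha\epsilon)$. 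Each push of $v$ costs $O(d(v))$ work in \textproc{UpdateSelf} plus \textproc{UpdateNgh}, so the total work across all pushes is $O(1/(\alpha\epsilon))$.

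Finally I would absorb the remaining per-iteration costs and convert the bound to hold with high probability. The \emap{} and \vmap{} on iteration $t$ do $O(\sum_{v \in F_t}d(v))$ work, and the filter rebuilding the frontier only inspects vertices in $F_t$ and their neighbors, which is again $O(\sum_{v \in F_t}d(v))$ work; both are dominated by the push cost. All reads and writes to $\p$, $\rv$, and $\mathbf{r'}$ are served by the concurrent hash table of~\cite{ShunB14}, which handles a batch of $N$ operations in $O(N)$ work with high probability, so the hashing overhead adds only a constant factor to the above bound with high probability. The anticipated obstacle is being careful that the snapshot parallel semantics really does guarantee mass conservation (so that $\|\p\|_1 \le 1$); once that invariant is established, the amortization is identical to the sequential one and yields the stated $O(1/(\alpha\epsilon))$ bound for both update rules.
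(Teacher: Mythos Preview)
Your proposal is correct and follows essentially the same amortized argument as the paper (which in turn mirrors Lemma~2 of Andersen et al.~\cite{Andersen2006}). The only cosmetic difference is that the paper tracks the per-iteration decrease in $\|\rv\|_1$ directly---using $\|\rv_1\|_1=1$ and $\|\rv_T\|_1\ge 0$ to bound the telescoping sum---rather than the increase in $\|\p\|_1$, so it never explicitly invokes the mass-conservation invariant you establish first; the two formulations are dual and yield the same bound.
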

\vspace{-7pt}
\begin{proof}
The proof is very similar to how the work is bounded in the sequential
algorithm~\cite{Andersen2006} and involves showing that the $l_1$ norm
(sum of entries) of the residual vector $\rv$ sufficiently decreases
in each iteration. This idea is also used in~\cite{Perozzi2014}. 

Our proof will be for PR-Nibble using the optimized update rule, but
the proof is similar when using the original rule.  Denote the
residual vector $\rv$ at the beginning of iteration $i$ by $\rv_i$. In
the pseudocode, each iteration will generate $\rv_{i+1}$ by moving
mass from $\rv_i$ into $\mathbf{r'}$. Denote the set of active
vertices in iteration $i$ by $A_i$.  Any vertex $v$ processed in the
iteration will have $\rv_i(v) \ge \epsilon d(v)$ (as the frontier is
defined by vertices that satisfy this property) and do $O(d(v))$ work
to perform the push. It will also contribute $O(d(v))$ work towards
the filter on Line 17.  The work of an iteration is thus $\sum_{v \in
  A_i}O(d(v))$, and the total work of the algorithm is
$\sum_{i=1}^T\sum_{v \in A_i}O(d(v))$ if it runs for $T$ iterations.

A push from vertex $v$ will first set $\mathbf{r'}[v]$ to 0, and then
add a total of $((1-\alpha)/(1+\alpha))\rv_i(v)$ mass to its neighbors' entries
in the $\mathbf{r'}[v]$ vector.  Thus the total contribution to
$|\rv_i|_1 - |\rv_{i+1}|_1$ of vertex $v$ is
$(1-(1-\alpha)/(1+\alpha))\rv_i[v] = (2\alpha/(1+\alpha))\rv_i[v] \ge
(2\alpha/(1+\alpha))\epsilon d(v) \ge \alpha \epsilon d(v)$.  Parallel
updates to $\mathbf{r'}$ will be correctly reflected due to the use of
the atomic fetch-and-add function, and so all vertices $v$ that are on the
frontier in iteration $i$ will contribute at least $\alpha\epsilon
d(v)$ to the difference $|\rv_i|_1 - |\rv_{i+1}|_1$.

We know that $|\rv_1|_1 = 1$ and $|\rv_T|_1 \le 1$,
and so
$1\ge |\rv_1|_1-|\rv_T|_1 = \sum_{i=1}^T(|\rv_i|_1-|\rv_{i+1}|_1)\ge\sum_{i=1}^T\sum_{v\in
  A_i}\alpha\epsilon d(v)$. Rearranging, we have
$\sum_{i=1}^T\sum_{v\in A_i} d(v) \le 1/(\alpha\epsilon)$.  Thus, the total
work of the algorithm is $\sum_{i=1}^T\sum_{v \in A_i}O(d(v)) \le
O(1/(\alpha\epsilon))$. Note that this bound is independent
of $T$.
\end{proof}
\vspace{-7pt}

We also note that both versions of parallel PR-Nibble can be shown to satisfy the same asymptotic
conductance guarantees as the sequential algorithm
of~\cite{Andersen2006}. 

In practice we found that the parallel versions do more work than the
corresponding sequential versions, but benefit from a fewer number of
iterations, each of which can be
parallelized. Table~\ref{table:pr-nibble} shows the number of pushes
for both the sequential and parallel versions of PR-Nibble with the
optimized update rule on several real-world graphs. The table also
shows the number of iterations required for parallel PR-Nibble (the
number of iterations for sequential PR-Nibble is equal to the number
of pushes). We see that the number of pushes of the parallel version
is higher by at most a factor of $1.6$x and usually much less. The
number of iterations is significantly lower than the number of pushes,
indicating that on average there are many pushes to do in an
iteration, so parallelism is abundant.

{
\begin{table}
\small
\centering
\begin{tabular}[!t]{c|r|r|r}
{\bf Input Graph} & Num. Pushes & Num. Pushes & Num. Iterations\\
& (sequential) & (parallel) & (parallel) \\
\hline
soc-LJ & 475,815  & 535,418 & 66 \\
cit-Patents & 4,154,752 & 5,323,710 & 139\\
com-LJ & 763,213 & 917,798 & 77\\
com-Orkut & 516,787  & 803,038 & 166 \\
Twitter & 125,478 & 127,625 & 44\\
com-friendster & 1,731,670 & 2,086,491 & 65\\
Yahoo & 740,410 & 830,457& 96 \\
\end{tabular}
\caption{Number of pushes and number of iterations for PR-Nibble on
  several graphs using $\alpha=0.01$ and $\epsilon=10^{-7}$. More
  information about the graphs can be found in
  Table~\ref{table:inputs} of
  Section~\ref{sec:exp}.}\label{table:pr-nibble}
\end{table}
}

We also implemented a parallel version which in each iteration
processes the top $\beta$-fraction ($0<\beta \le 1$) of the vertices
in the set $\{v~\vert ~\rv[v] \ge d(v)\epsilon\}$ with the highest
$\rv[v]/d(v)$ values. The $\beta$ parameter trades off between additional work
and parallelism.  We found that this optimization helped in practice
for certain graphs, but not by much. Furthermore the best value of
$\beta$ varies among graphs. We do not report the details of the
performance of this variant in this paper due to space constraints.


\subsection{Deterministic Heat Kernel PageRank}\label{sec:hkpr}
Kloster and Gleich~\cite{Kloster2014} present an algorithm for
approximating the heat kernel PageRank distribution of a graph, a
concept that was first introduced in~\cite{Chung2009}.  For a seed
vector $\mathbf s$, random walk matrix $\mathbf P=\mathbf A
\mathbf{D^{-1}}$ ($\mathbf A$ is the adjacency matrix corresponding to
the graph, and $\mathbf D$ is the diagonal matrix with $\mathbf D[i,i]$
containing $d(i)$) and parameters $k$ and $t$, the heat kernel
PageRank vector is defined to be $\mathbf h =
e^{-t}\Large(\sum_{k=0}^\infty\frac{t^k}{k!}(\mathbf
P)^k\Large)\mathbf s$.

The algorithm of Kloster and Gleich, which we refer to as
\defn{HK-PR}, takes as input parameters $N$, $\epsilon$, $t$, and a
seed vertex $x$. It approximates $\mathbf h$ by approximating
$\sum_{k=0}^\infty\frac{t^k}{k!}(\mathbf P)^k$ with its degree-$N$
Taylor polynomial $\sum_{k=0}^N\frac{t^k}{k!}(\mathbf P)^k$, which can
be cast as a linear system.  Their algorithm first computes values
$\psi_k=\sum_{m=0}^{N-k}\frac{k!}{(m+k)!}t^m$ for $k=0,\ldots,N$. It
uses a vector $\rv$, indexed by integer pairs and initialized with
$\rv[(s,0)] = 1$, a vector $\p$ initialized to contain
all $0$'s, and a queue initialized to contain just $(s,0)$. Both $\rv$
and $\p$ are represented using sparse sets. Each iteration
removes an entry $(v,j)$ from the front of the queue, and performs the following
update:
\begin{enumerate}[itemsep=0pt,topsep=3pt]
\item $\p[v] = \p[v] + \rv[(v,j)]$
\item $M = t\cdot \rv[(v,j)]/((j+1)d(v))$
\item for each $w$ such that $(v,w)\in E$: \par
\hspace{2ex} if $j+1==N$: \par\hspace{4ex} then $\p[w] = \p[w] + \rv[(v,j)]/d(v)$ \par
\hspace{2ex}  else: \par
\hspace{4ex}  if $\rv[(w,j+1)] < \frac{e^{-t}\epsilon d(w)}{2N\psi_{j+1}}$ and $\rv[(w,j+1)] + M \ge \frac{e^{-t}\epsilon d(w)}{2N\psi_{j+1}}$: \par\hspace{6ex} then add $(w,j+1)$ to the queue  \par
\hspace{4ex}  $\rv[(w,j+1)] = \rv[(w,j+1)] + M$
\end{enumerate}
The vector $\p$ is output when the queue becomes empty, and a sweep cut
is applied on it to obtain a cluster. The algorithm is deterministic
in that it will generate the same $\p$ every time given the same inputs.

Our sequential implementation follows the procedure above. We observe
that this algorithm can be parallelized by applying the above
procedure on all queue entries $(v,j)$ with the same $j$ value in
order of increasing $j$ because processing these entries only causes
updates to entries $(w,j+1)$ in $\rv$, and can only possibly add
entries with the same form to the queue. Conflicting updates can be
resolved with fetch-and-add. Except for when $j=N-1$, the queue
entries $(v,j)$ only update $\p[v]$, so there will be no conflicting
updates to the $\p$ vector among different vertices $v$. For $j=N-1$,
we can use a fetch-and-add to correctly update the $\p$ vector.

\begin{figure}[!t]
\begin{algorithmic}[1] 
\State $\codevar{\set{}}~\p = \{\}$
\State $\codevar{\set{}}~\rv = \{\}$
\State $\codevar{\set{}}~\mathbf{r'} = \{\}$
\Procedure{UpdateNgh}{$s$, $d$} \Comment{passed to \emap{} on all rounds but the last}
\State $\fetchAdd(\text{\&}\mathbf{r'}[d],t\cdot \rv[s]/((j+1)d(s)))$
\EndProcedure
\smallskip
\Procedure{UpdateNghLast}{$s$, $d$}  \Comment{passed to \emap{} on the last round}
\State $\fetchAdd(\text{\&}\p[d],\rv[s]/d(s))$
\EndProcedure
\smallskip

\Procedure{UpdateSelf}{$v$} \Comment{passed to \vmap{}}
\State $\p[v] = \p[v] + \rv[v]$
\EndProcedure

\smallskip
\Procedure{HK-PR}{$G$, $x$, $N$, $\epsilon$, $t$}
\State precompute $\psi_k=\sum_{m=0}^{N-k}\frac{k!}{(m+k)!}t^m$ for $k=0,\ldots,N$
\State $\rv = \{(x,1)\}$
\State $\codevar{\vset{}}~\codevar{Frontier} = \{x\}$ \Comment{seed vertex}
\State $j=0$
\While {$(\size(\codevar{Frontier}) > 0)$}
\State $\vmap{}(\codevar{Frontier}, \textproc{UpdateSelf})$
\If{$j+1==N$}
\State $\mathbf{r'} = \{\}$
\State $\emap{}(G, \codevar{Frontier}, \textproc{UpdateNgh})$
\State $\rv = \mathbf{r'}$
\State $\codevar{Frontier} = \{v~\vert ~\rv[v] \ge \frac{e^t\epsilon d(v}{2N\psi_{j+1}(t)}\}$ \Comment{using filter}
\State $j = j+1$
\Else \Comment{last round}
\State $\emap{}(G, \codevar{Frontier}, \textproc{UpdateNghLast})$
\State \algorithmicbreak{}
\EndIf
\EndWhile
\State \algorithmicreturn{} $\p$
\EndProcedure
\end{algorithmic}
\caption{Pseudocode for parallel HK-PR.} \label{alg:hkpr}
\end{figure}

The pseudocode for our parallel implementation is shown in
Figure~\ref{alg:hkpr}.  We no longer need to index $\rv$ with the
second integer $j$, since this is now implicitly captured by the
iteration number of the algorithm. We use $\rv$ to store the values
for the current iteration and $\mathbf{r'}$ to store the values for
the next iteration. We initialize $\rv$ and the frontier to contain
just the seed vertex (Lines 12--13). On each iteration we apply a
\vmap{} to update the $\p$ values of the frontier vertices (Line
16). If it is not the last round ($j+1 < N$), we apply an \emap{} to
update the $\mathbf{r'}$ values of the neighbors of the frontier using
the update rule with fetch-and-add (Line 19), and generate a new
frontier for the next round based on the threshold specified in the
sequential algorithm using a filter (Line 21). For the last round, we apply an
\emap{} to update the $\p$ values of the neighbors of the frontier
(Line 24).  This parallel algorithm applies the same updates as the
sequential algorithm and thus the vector returned is the same.

As shown in~\cite{Kloster2014}, the sequential algorithm explores
$O(Ne^t/\epsilon)$ edges, leading to an overall work of
$O(Ne^t/\epsilon)$.  Our parallel algorithm only does a constant
factor more work (the work of the filter is proportional to the number
of edges processed in the iteration), and so also has a work bound of
$O(Ne^t/\epsilon)$.  The depth for iteration $j$ is $O(\log U_j)$ for
the fetch-and-adds and filter, where $U_j$ is the number of vertices
and edges processed in the iteration, and $\sum_{j=0}^NU_j =
O(Ne^t/\epsilon)$. This gives an overall depth of $\sum_{j=0}^NO(\log
U_j) = O(Nt\log (1/\epsilon))$, where we use the fact that the
logarithm is a concave function and the sum is maximized when all
$U_j$'s are equal. The initialization on Line 11 can be done in
$O(N^2)$ work and $O(\log N)$ depth using prefix sums, which is
work-efficient. This gives the following theorem.

\begin{theorem}
The parallel algorithm for HK-PR requires $O(N^2+Ne^t/\epsilon)$ work and
$O(Nt\log(1/\epsilon))$ depth with high probability.
\end{theorem}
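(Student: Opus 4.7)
The plan is to split the argument into a work bound and a depth bound, each of which reduces to the sequential HK-PR analysis of Kloster and Gleich together with the cost of the Ligra primitives used in Figure~\ref{alg:hkpr}.

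For the work bound, I would first observe that the parallel algorithm applies exactly the same updates as the sequential algorithm: each queue entry $(v,j)$ in the sequential algorithm corresponds to $v$ appearing on the frontier in iteration $j$ of the parallel algorithm, and the updates to $\p$ and $\rv$ are identical (conflicts on shared neighbors are resolved by \fetchAdd{}). Hence the total number of edges explored across all iterations is still $O(Ne^t/\epsilon)$ by~\cite{Kloster2014}. Each iteration performs a \vmap{} over the frontier, an \emap{} over its incident edges, and a filter over that same edge set; by the Ligra and sparse-set cost model recalled in Section~\ref{sec:prelims}, each of these contributes work proportional to the frontier size plus its edge count. Summing over iterations gives $O(Ne^t/\epsilon)$. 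The precomputation of $\psi_0,\ldots,\psi_N$ on Line~11 can be done with $N$ prefix sums in $O(N^2)$ work, which gives the claimed $O(N^2 + Ne^t/\epsilon)$ total.

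For the depth bound, I would analyze iteration $j$ separately: with $U_j$ denoting the number of vertices and edges touched in that iteration, the \vmap{}, \emap{}, and filter each run in $O(\log U_j)$ depth, and the \fetchAdd{}s and concurrent hash-table insertions into $\mathbf{r'}$ contribute $O(\log U_j)$ depth with high probability. There are at most $N$ iterations, so the total depth is $O(\log N) + \sum_{j=0}^{N-1} O(\log U_j)$, where the first term is the initialization. The key step is to bound this sum using that $\log$ is concave and $\sum_j U_j = O(Ne^t/\epsilon)$: by Jensen's inequality,
\[
\sum_{j=0}^{N-1} \log U_j \;\le\; N \log\!\left(\tfrac{1}{N}\sum_j U_j\right) \;=\; O\!\left(N \log(e^t/\epsilon)\right) \;=\; O(Nt + N\log(1/\epsilon)),
\]
which is absorbed into $O(Nt\log(1/\epsilon))$ under the standing assumptions $t\ge 1$ and $\epsilon \le 1/e$. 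Finally, the high-probability qualifier follows from a union bound over the $N$ iterations against the failure probabilities of the hash-table operations and the integer sort inside \emph{filter}, each of which already fails with probability $1/\mathrm{poly}(U_j)$.

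The main obstacle is the depth calculation: one must be careful that the $U_j$ are not all $0$ (so that $\log U_j$ is well-defined — degenerate iterations contribute $O(1)$), and that Jensen's inequality is applied correctly so that the final form matches $O(Nt\log(1/\epsilon))$ in the theorem statement rather than the tighter but less clean $O(N\log(e^t/\epsilon))$. Everything else — the per-iteration cost accounting, the identity between the sequential and parallel update sequences, and the union bound for the high-probability claim — is routine given the primitives developed in Section~\ref{sec:prelims} and the analyses in the earlier subsections.
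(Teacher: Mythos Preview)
Your proposal is correct and follows essentially the same approach as the paper: reduce the work to the sequential edge-count bound $O(Ne^t/\epsilon)$ from Kloster--Gleich plus the $O(N^2)$ initialization, and bound the depth by summing $O(\log U_j)$ over at most $N$ iterations and invoking concavity of $\log$. One minor slip: filter is implemented via prefix sums, not integer sort, so the high-probability qualifier arises from the concurrent hash-table (sparse-set) operations rather than from anything inside filter.
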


As noted in~\cite{Kloster2014}, in practice $N$ is set to at most
$2t\log(1/\epsilon)$, so the $O(N^2)$ term in the work is a lower-order
term.

\subsection{Randomized Heat Kernel PageRank}\label{sec:rand-hkpr}
Chung and Simpson~\cite{Chung2015} describe a randomized algorithm for
approximating the heat kernel PageRank based on running a sample of
random walks and computing the distribution of the last vertices
visited in the random walks. We refer to this algorithm as \defn{rand-HK-PR}.
The algorithm takes as input parameters $x$, $N$, $K$, and $t$, where
$x$ is the seed vertex, $N$ is the number of random walks to perform,
$K$ is the maximum length of a random walk, and $t$ is a parameter to
the heat kernel PageRank equation, as defined in Section~\ref{sec:hkpr}.
Rand-HK-PR runs $N$ random walks starting from $x$, where each step of
the walk visits a neighbor of the current vertex with equal
probability, and the walk length is $k$ with probability
$e^{-t}t^k/k!$. The maximum walk length is set to $K$. It maintains a
vector $\p$, where $\p[v]$ stores the number of random walks that
ended on vertex $v$.
The sequential algorithm stores $\p$ as a sparse set, initialized to
be empty, and executes one random walk at a time for $N$ times, each time
incrementing $\p[v]$ by $1$ where $v$ is the last vertex visited in the random
walk. The vector returned and passed to the sweep cut procedure is
$(1/N)\p$. 

The algorithm is easily parallelizable by running all of the random
walks in parallel and incrementing $\p$ using a
fetch-and-add. However, we found that naively implementing this approach led to poor
speed up since many random walks end up on the same vertex causing
high memory contention when using fetch-and-adds to update that
location. Instead, we keep an array $A$ of length $N$ and have the
$i$'th random walk store its destination vertex into $A[i]$. Afterward
we sort $A$, and compute the number of random walks ending on each
vertex using prefix sums and filter. In particular, we create an
auxiliary array $B$, and for all locations $i$ in the sorted $A$ array
such that $A[i] \neq A[i-1]$, $B[i] = i$, and otherwise $B[i] =
-1$. Filtering out all $-1$ entries in $B$ gives the offsets where the
entries in $A$ differ, and the difference between consecutive offsets
gives the number of random walks ending at a particular value, which 
allows us to compute $\p$. The prefix sums and filter take $O(N)$ work
and $O(\log N)$ depth. To perform the sorting in $O(N)$ work and
$O(\log N)$ depth, we compute a mapping from each last-visited vertex
to an integer in $[0,\ldots,N]$ using a parallel hash table, so that
the maximum value is bounded by $N$, and then use a parallel integer
sort~\cite{RR89} on the mapped values.

The sequential algorithm takes $O(NK)$ work since $N$ random walks of
length $O(K)$ are executed. The parallel algorithm takes $O(NK)$ work and $O(K+\log N)$ depth, as all random walks are run in
parallel and each takes $O(K)$ steps. This gives a work-efficient parallel algorithm, and we have the following theorem: 

\begin{theorem}
The parallel algorithm for rand-HK-PR takes $O(NK)$ work and $O(K+\log
N)$ depth with high probability.
\end{theorem}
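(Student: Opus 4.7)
The plan is to verify the bound by accounting separately for each of the four phases of the algorithm described above (random walks, mapping, sort, and reduction via prefix sums/filter), then summing the work and taking the maximum of the depths.

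First, for the random walk phase, each of the $N$ walks is an independent sequential process of length at most $K$ (the length of walk $i$ is drawn from a truncated Poisson distribution capped at $K$, and each step requires $O(1)$ work to pick a uniformly random neighbor). Running the walks in parallel and writing their final vertex to $A[i]$ therefore contributes $O(NK)$ work and $O(K)$ depth. I would also note that storing into distinct slots $A[i]$ avoids the contention issue mentioned in the text and requires no synchronization.

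Second, the relabeling step uses a parallel hash table to map each distinct destination vertex appearing in $A$ to an integer in $[0,\ldots,N]$; by the bound cited in the preliminaries this takes $O(N)$ work and $O(\log N)$ depth with high probability. The relabeled array then has keys bounded by $N$, so the parallel integer sort of~\cite{RR89} sorts it in $O(N)$ work and $O(\log N)$ depth with high probability. For the reduction phase, constructing the auxiliary array $B$ is a single data-parallel pass in $O(N)$ work and $O(1)$ depth, and the subsequent filter and prefix sums each cost $O(N)$ work and $O(\log N)$ depth, which suffices to recover $\p$.

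Summing across phases gives total work $O(NK) + O(N) + O(N) + O(N) = O(NK)$, and the depth is dominated by the maximum of the walk depth $O(K)$ and the hashing/sorting/prefix sum depth $O(\log N)$, yielding $O(K+\log N)$ overall. All randomized bounds hold with high probability, so the compound high-probability statement follows by a union bound over the $O(1)$ randomized subroutines. The main subtlety, rather than any single calculation, is simply to point out why the naive fetch-and-add approach (which would give $O(NK)$ work but potentially $\Omega(N)$ depth from contention in the worst case) is replaced by the sort-based reduction; once that design choice is in place, the bounds are additive over the phases above.
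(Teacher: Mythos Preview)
Your proposal is correct and follows essentially the same decomposition as the paper: random walks contribute $O(NK)$ work and $O(K)$ depth, while the hash-based relabeling, integer sort, and prefix-sum/filter reduction each contribute $O(N)$ work and $O(\log N)$ depth with high probability, giving the stated totals. One minor quibble with your closing aside: in the work-depth model used here, a batch of $N$ fetch-and-adds has $O(\log N)$ depth, not $\Omega(N)$; the paper's motivation for the sort-based reduction is purely practical (memory contention in the implementation), not asymptotic.
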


In contrast to the previous three
algorithms, we do not need to use the Ligra functions for this
algorithm since the random walks are independent and only process a
single vertex in each iteration.

We note that Chung and Simpson describe a
distributed version of their algorithm~\cite{Chung2015a}. This
work differs from ours in that it assumes the input
graph is the network of processors and studies how to compute local clusters
with limited communication. In contrast, we study the setting where
the input graph is independent of the number of processors or
their layout, and our goal is to speed up the computation by taking
advantage of parallel resources.

\begin{table}
\small
\centering
\begin{tabular}[!t]{c|r|r}
{\bf Input Graph} & Num. Vertices & Num. Edges$^\dag$\\
\hline\hline
soc-LJ & 4,847,571 & 42,851,237 \\
cit-Patents & 6,009,555 & 16,518,947\\
com-LJ & 4,036,538 & 34,681,189 \\
com-Orkut & 3,072,627 & 117,185,083 \\
nlpkkt240 & 27,993,601 & 373,239,376 \\
Twitter & 41,652,231 & 1,202,513,046 \\
com-friendster & 124,836,180 & 1,806,607,135 \\
Yahoo & 1,413,511,391 & 6,434,561,035 \\
randLocal (synthetic) & 10,000,000 & 49,100,524\\
3D-grid (synthetic) & 9,938,375 & 29,815,125\\

\end{tabular}
\caption{Graph inputs used in experiments. $^\dag$Number of unique
  undirected edges.}\label{table:inputs}
\end{table}

{
{\setlength{\tabcolsep}{2.5pt}
\begin{sidewaystable}[!ph]
\scriptsize
\centering
\begin{tabular}[!t]{c|rr|rr|rr|rr|rr|rr|rr|rr|rr|rr}
Algorithm & \multicolumn{2}{c|}{soc-LJ} & \multicolumn{2}{c|}{cit-Patents} & \multicolumn{2}{c|}{com-LJ} & \multicolumn{2}{c|}{com-Orkut} & \multicolumn{2}{c|}{nlpkkt240} & \multicolumn{2}{c|}{Twitter} & \multicolumn{2}{c|}{com-friendster} & \multicolumn{2}{c|}{Yahoo} & \multicolumn{2}{c|}{randLocal} & \multicolumn{2}{c}{3D-grid}\\

& $T_{1}$ & $T_{40}$ & $T_{1}$ & $T_{40}$ & $T_{1}$ & $T_{40}$ & $T_{1}$ & $T_{40}$ & $T_{1}$ & $T_{40}$ & $T_{1}$ & $T_{40}$ & $T_{1}$ & $T_{40}$ & $T_{1}$ & $T_{40}$ & $T_{1}$ & $T_{40}$ & $T_{1}$ & $T_{40}$ \\
\hline\hline
Parallel Nibble  & 19.20 & 0.664 & 3.06 & 0.195 & 15.30 & 0.566 & 3.83 & 0.290 & 0.02 & 0.047 & 1.60 & 0.167 & 4.53 & 0.297 & 1.88 & 0.214 & 2.47 & 0.196 & 0.01 & 0.040\\
Sequential Nibble    & 24.90 & -- & 4.16 & -- & 17.80 & -- & 8.26 & -- & 0.02 & -- & 7.93 & -- & 8.40 & -- & 1.97 & -- & 5.88 & -- & 0.01 & --\\
\hline
Parallel PR-Nibble   & 6.08 & 0.350 & 21.50 & 0.978 & 7.59 & 0.436 & 9.86 & 0.572 & 1.82 & 0.530 & 2.50 & 0.177 & 8.62 & 0.492 & 6.16 & 0.623 & 28.90 & 1.250 & 1.90 & 0.502\\
Sequential PR-Nibble  & 2.17 & -- & 8.92 & -- & 2.74 & -- & 1.94 & -- & 2.21 & -- & 0.70 & -- & 7.97 & -- & 3.56 & -- & 9.46 & -- & 1.41 & --\\

\hline
Parallel HK-PR    & 56.60 & 1.710 & 11.00 & 0.430 & 39.70 & 1.230 & 28.80 & 0.918 & 0.02 & 0.049 & 7.55 & 0.434 & 20.40 & 0.866 & 6.40 & 0.377 & 11.90 & 0.479 & 0.01 & 0.049\\
Sequential HK-PR     & 241.00 & -- & 24.20 & -- & 156.00 & -- & 93.40 & -- & 0.03 & -- & 28.60 & -- & 68.30 & -- & 11.30 & -- & 32.00 & -- & 0.02 & --\\
\hline
Parallel rand-HK-PR   & 129.00 & 2.080 & 23.10 & 0.634 & 117.00 & 1.610 & 83.90 & 1.390 & 23.70 & 0.568 & 75.70 & 1.390 & 45.30 & 0.825 & 42.40 & 0.901 & 33.90 & 0.744 & 21.20 & 0.496\\
Sequential rand-HK-PR   & 195.00 & -- & 43.20 & -- & 154.00 & -- & 121.00 & -- & 18.90 & -- & 160.00 & -- & 85.60 & -- & 67.10 & -- & 51.70 & -- & 19.10 & --\\
\hline
Parallel Sweep   & 4.37 & 0.189 & 1.98 & 0.075 & 5.10 & 0.217 & 6.73 & 0.285 & 0.01 & 0.003 & 75.00 & 3.180 & 5.00 & 0.221 & 43.10 & 1.550 & 2.24 & 0.083 & 0.01 & 0.01\\
Sequential Sweep & 2.57 & -- & 1.51 & -- & 3.20 & -- & 3.40 & -- & 0.01 & -- & 40.10 & -- & 3.66 & -- & 6.43 & -- & 1.72 & -- & 0.01 & --\\

\end{tabular}
\caption{Running times (seconds) of sequential and parallel local graph clustering algorithms and the sweep cut procedure. $T_1$ is the single-thread time and $T_{40}$ is the parallel time using 40 cores with hyper-threading. The parameters are set as follows: $T=20$ and $\epsilon=10^{-8}$ for Nibble; $\alpha=0.01$ and $\epsilon=10^{-7}$ for PR-Nibble; $t=10$, $N=20$, and $\epsilon=10^{-7}$ for HK-PR; and $t=10$, $K=10$, and $N=10^8$ for rand-HK-PR. The running times for the sweep cut are based on using the output of Nibble.
}
\label{table:numbers}
\end{sidewaystable}
}

\section{Experiments}\label{sec:exp}
We present an experimental study of both
parallel and sequential implementations of the local clustering
algorithms described in Section~\ref{sec:paralg} on large-scale
undirected graphs.  All of our implementations are
available at {\url{https://github.com/jshun/ligra/}}.

\myparagraph{Input Graphs} We use a set of unweighted undirected
real-world and synthetic graphs, whose sizes are shown in Table~\ref{table:inputs}. 
We obtained the
\defn{soc-LJ}, \defn{cit-Patents}, \defn{com-LJ},
\defn{com-Orkut}, and \defn{com-friendster} real-world graphs from {
  \url{http://snap.stanford.edu/}}. \defn{nlpkkt240} is a graph derived from a matrix of a
constrained optimization problem from {
  \url{http://www.cise.ufl.edu/research/sparse/matrices/}}.
\defn{Twitter} is a symmetrized version of a snapshot of the Twitter
network~\cite{Kwak2010}.  
\defn{Yahoo} is a symmetrized version of a Web graph from
{\footnotesize \url{http://webscope.sandbox.yahoo.com}}. 
\defn{randLocal} is a random graph where
every vertex has five edges to neighbors chosen with probability
proportional to the difference in the neighbor's ID value from the
vertex's ID. 
\defn{3D-grid} is a synthetic grid graph in 3-dimensional space where
every vertex has six edges, each connecting it to its 2 neighbors in
each dimension.  We remove all self and duplicate edges from the
graphs.

\myparagraph{Experimental Setup}
We run our experiments on a 40-core Intel machine (with two-way
hyper-threading\footnote{A form of simultaneous multithreading developed by Intel in which there are two logical processors per physical core.}) with $4\times 2.4\mbox{GHz}$ Intel 10-core E7-8870
Xeon processors (with a 1066MHz bus and 30MB L3 cache) and
256\mbox{GB} of main memory. 
The parallel programs are compiled with Cilk Plus from
the \texttt{g++} compiler (version 4.8.0) with the \texttt{-O3}
flag (they can also be compiled with OpenMP with similar performance).
Our parallel implementations are all written in Ligra~\cite{ShunB2013}, with
the exception of rand-HK-PR, which
does not need Ligra's functionality. The parallel implementations of
prefix sum, filter, comparison sort, and integer sort that we
use are from the Problem Based Benchmark Suite~\cite{SBFG}. The
concurrent hash table for representing sparse sets is
from~\cite{ShunB14}. For both sequential and parallel PR-Nibble, we
report performance of the versions using the optimized update rule as
described in Section~\ref{sec:pr-nibble}.

\begin{figure*}[!t]
\centering
\subfigure[Nibble running time]{
\includegraphics[width=0.31\textwidth]{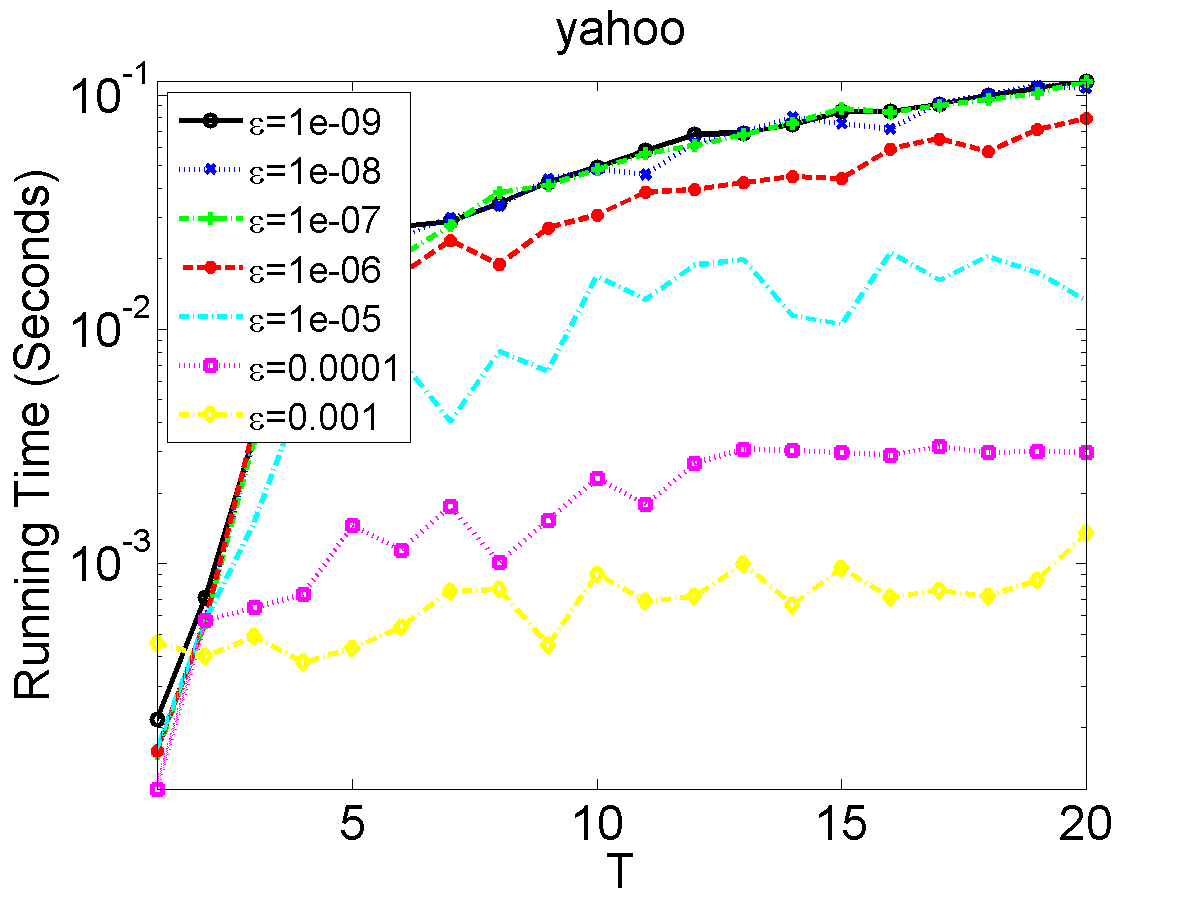}
\label{fig:nibble-time}
}
\subfigure[Nibble conductance]{
\includegraphics[width=0.31\textwidth]{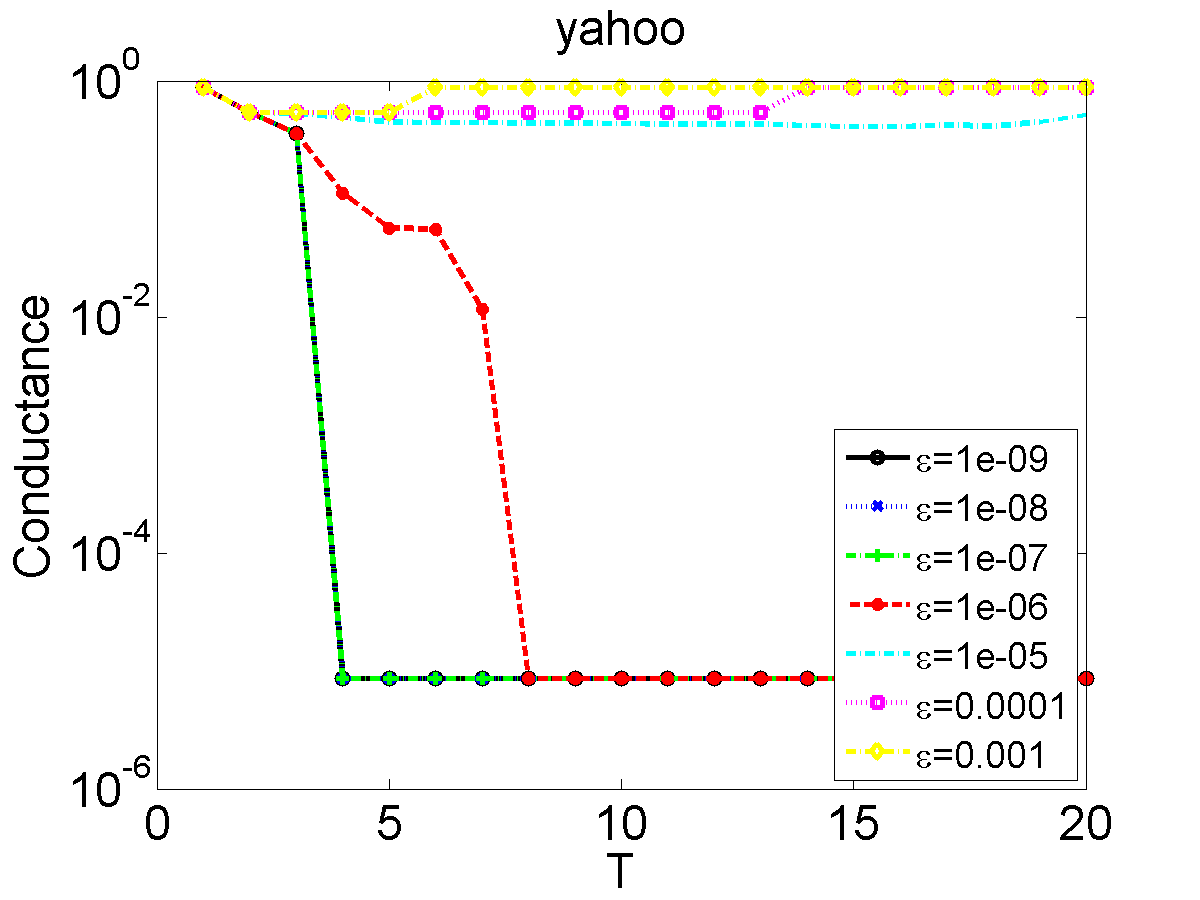}
\label{fig:nibble-conductance}
}
\subfigure[PR-Nibble running time]{
\includegraphics[width=0.31\textwidth]{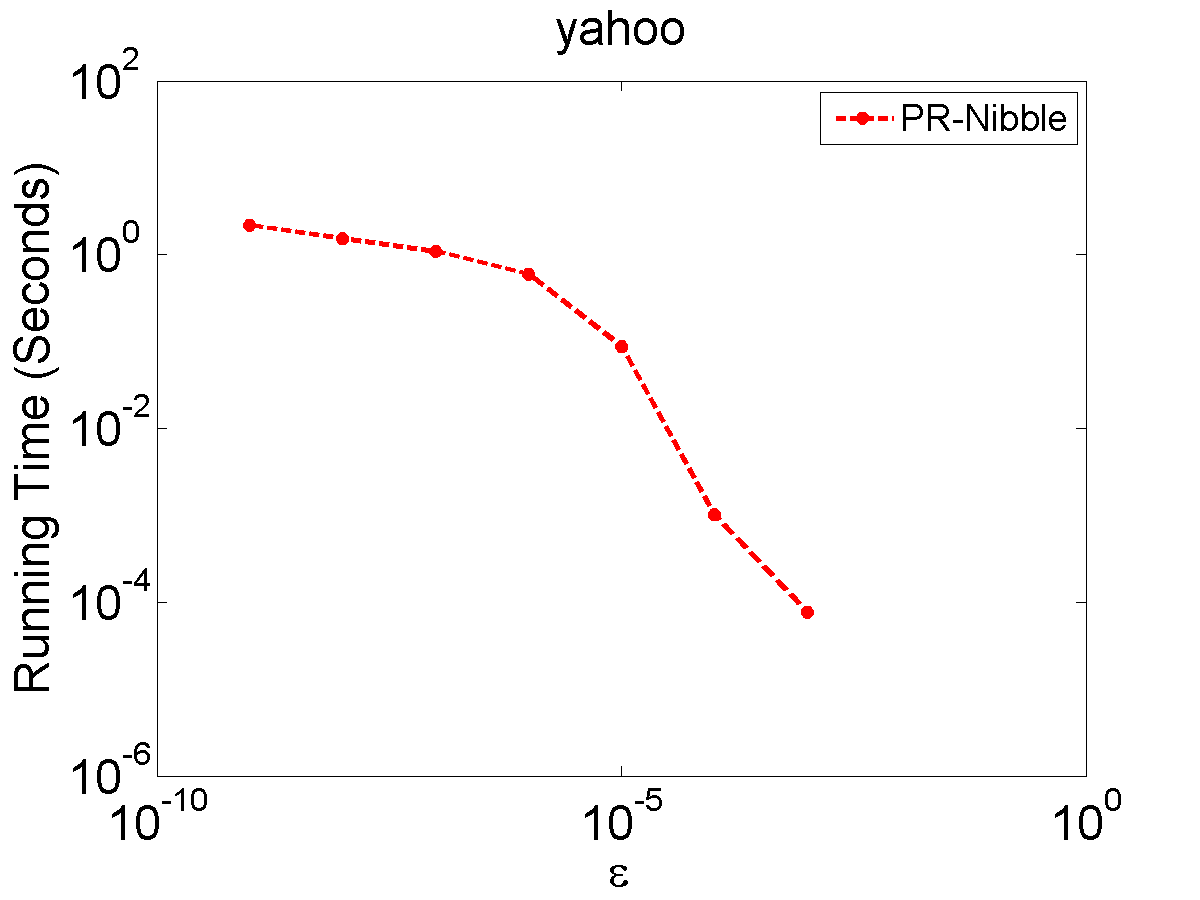}
\label{fig:pr-nibble-time}
}

\subfigure[PR-Nibble conductance]{
\includegraphics[width=0.31\textwidth]{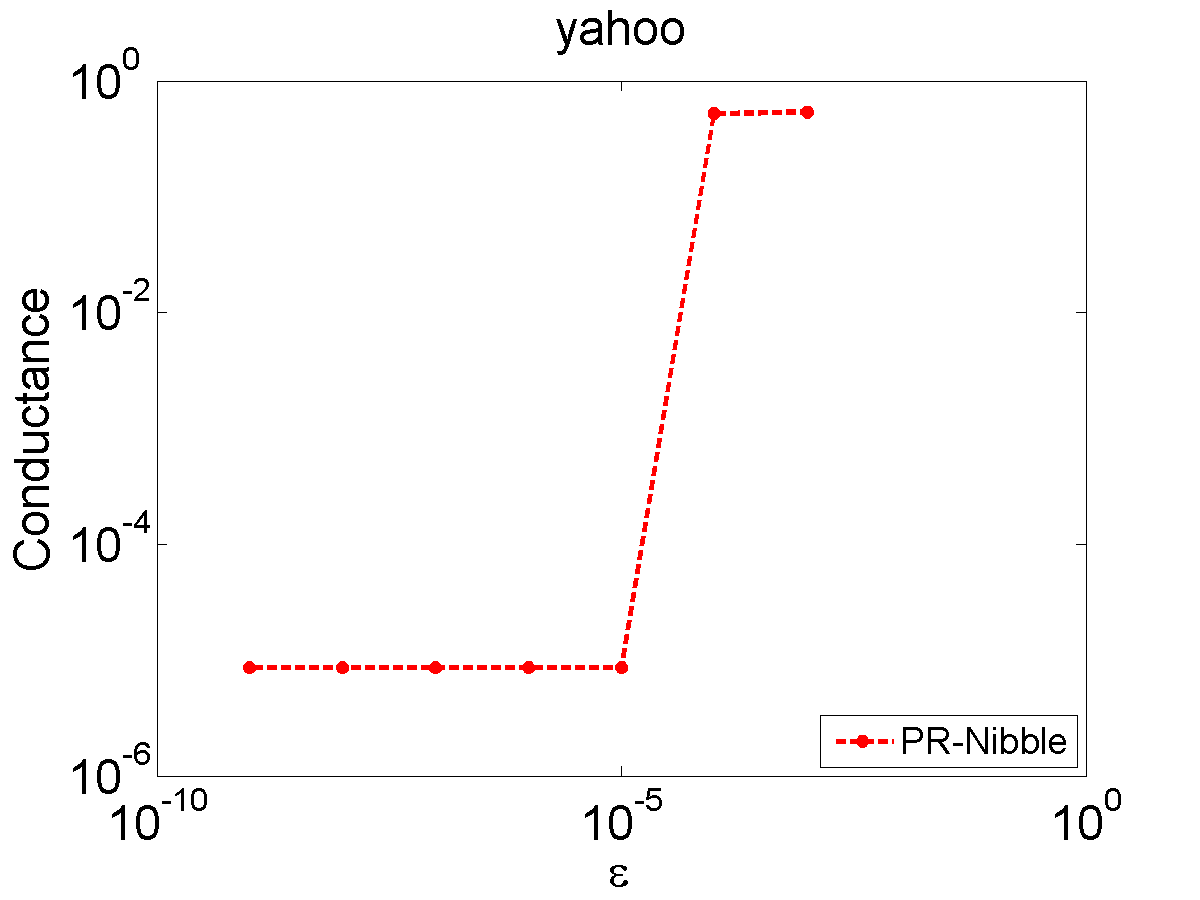}
\label{fig:pr-nibble-conductance}
}
\subfigure[HK-PR running time]{
\includegraphics[width=0.31\textwidth]{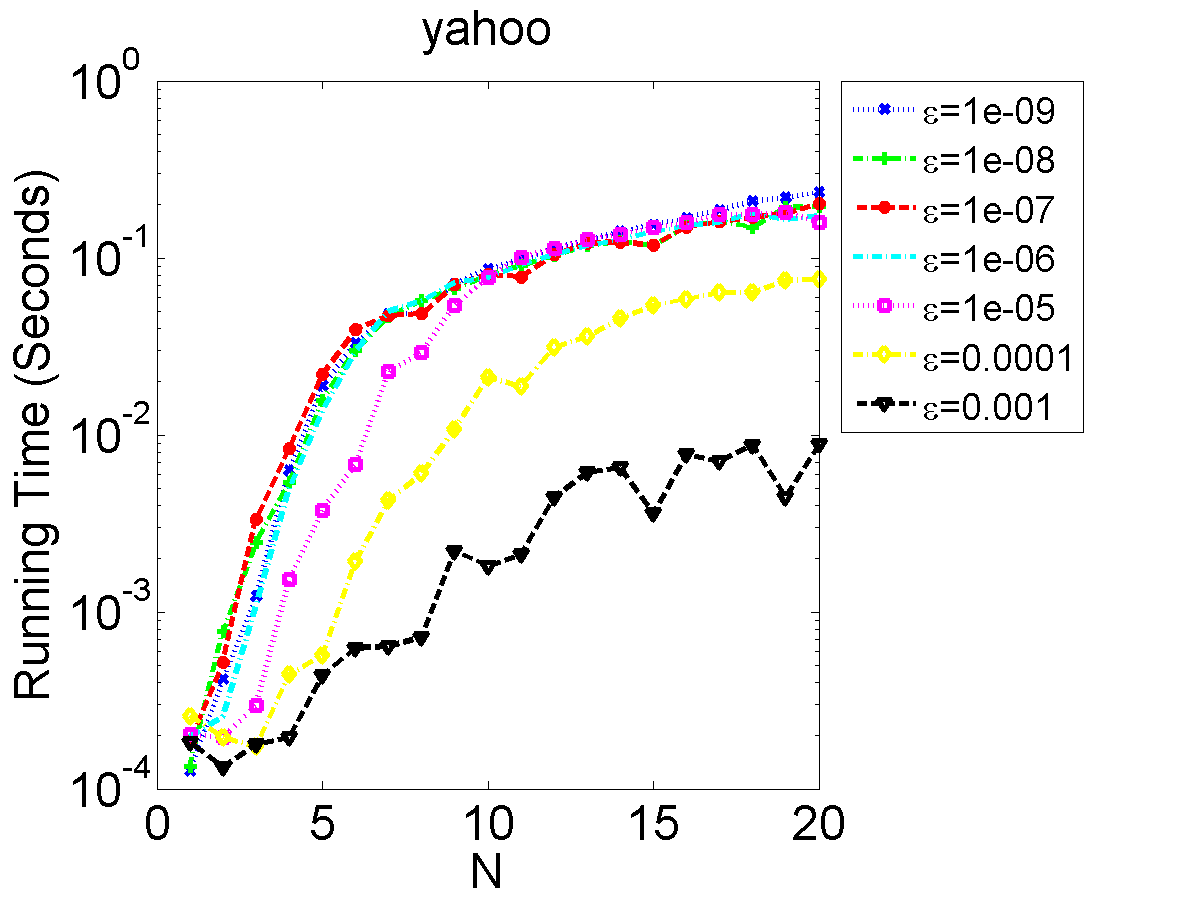}
\label{fig:hk-pr-time}
}
\subfigure[HK-PR conductance]{
\includegraphics[width=0.31\textwidth]{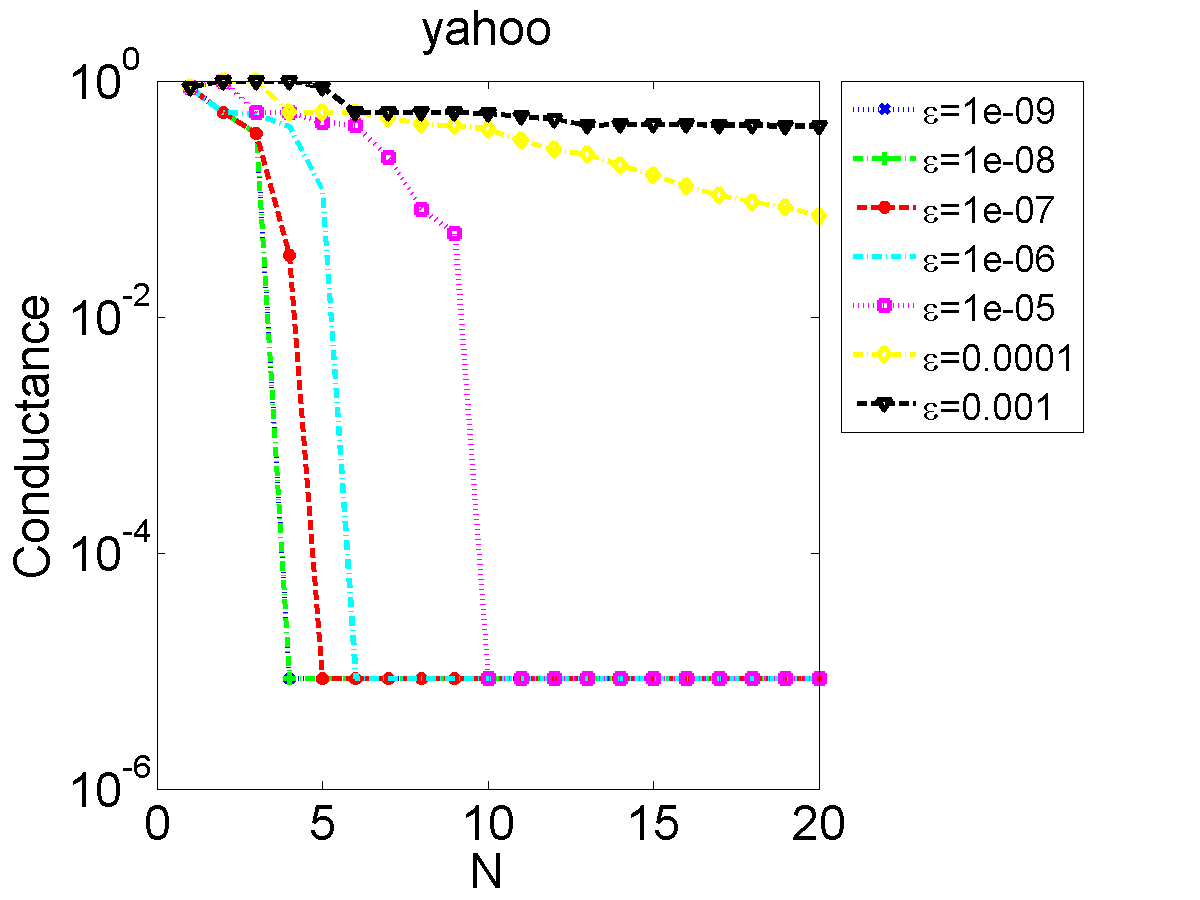}
\label{fig:hk-pr-conductance}
}

\subfigure[rand-HK-PR running time]{
\includegraphics[width=0.31\textwidth]{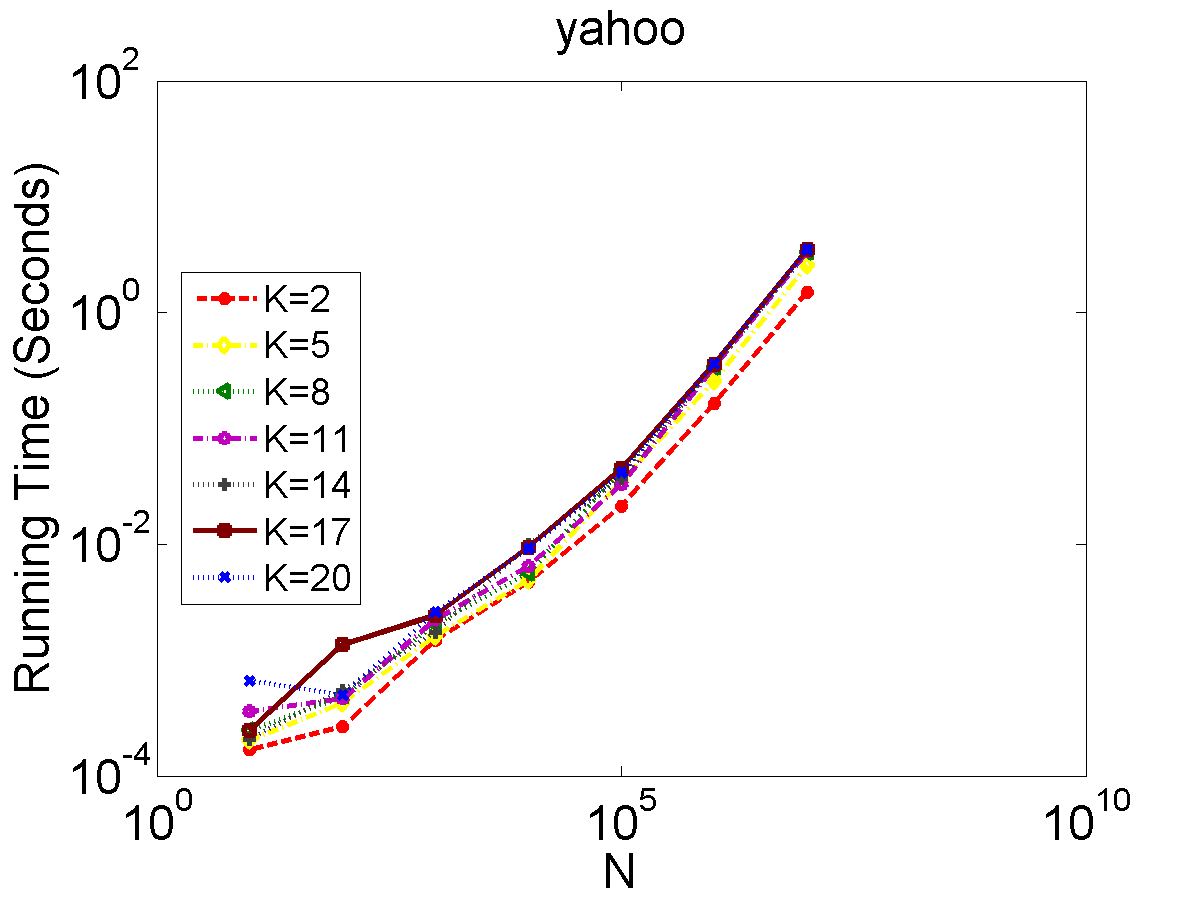}
\label{fig:rand-hk-pr-time}
}
\subfigure[rand-HK-PR conductance]{
\includegraphics[width=0.31\textwidth]{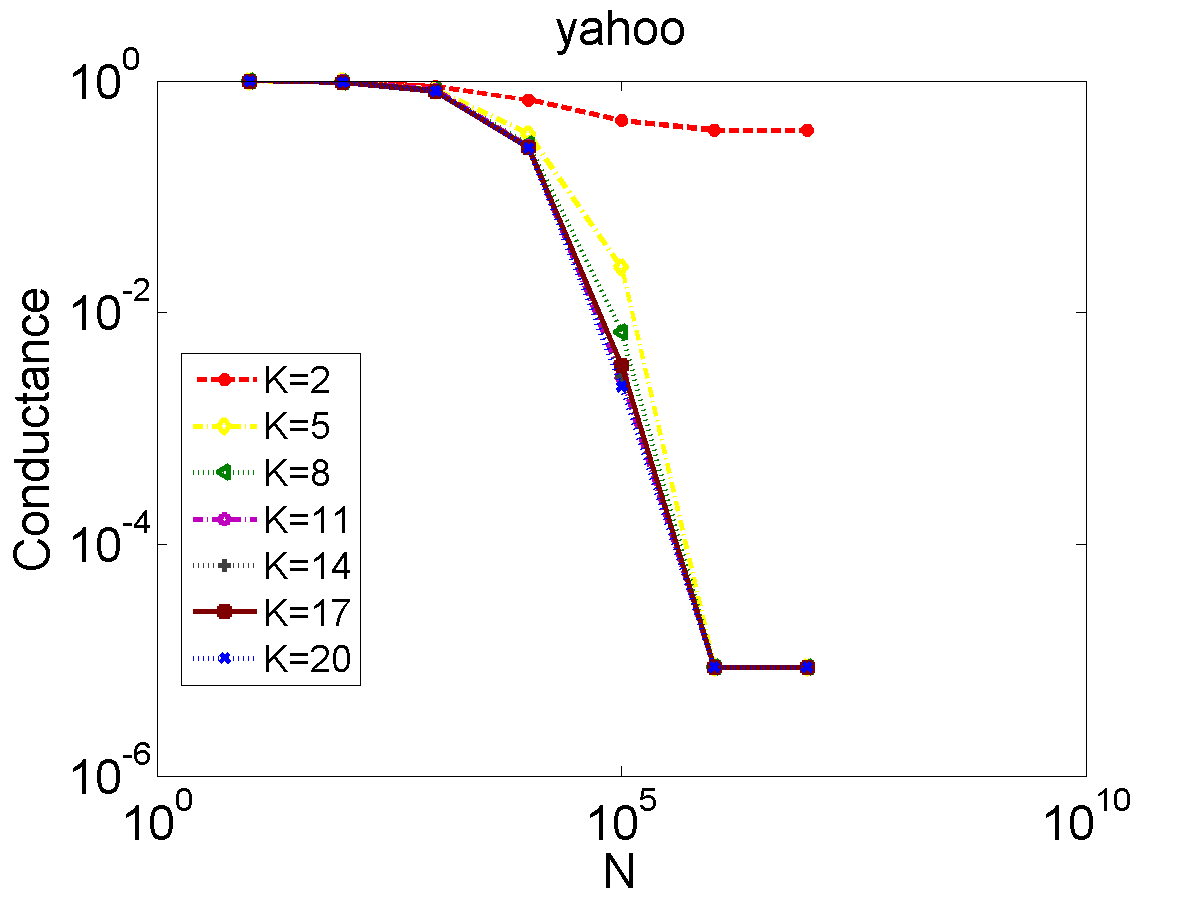}
\label{fig:rand-hk-pr-conductance}
}
\caption{Running time (seconds) and conductance of algorithms as a function of
  parameter settings on the Yahoo graph. (Best viewed in color.)}
\label{fig:parameter-plots}
\end{figure*}

\myparagraph{Parameter setting versus algorithm performance} We first
study how the setting of the various parameters in Nibble, PR-Nibble,
HK-PR, and rand-HK-PR affect their running time and the conductance of
the cluster generated. Figure~\ref{fig:parameter-plots} shows the
results of this study on the Yahoo graph, the largest graph in this
paper. All experiments start from the same seed vertex, which was
chosen by sampling $10^4$ vertices and picking the one that gave the
lowest-conductance clusters.  The trends are the same for both
sequential and parallel implementations, and the reported results are
for the sequential implementations.

As expected, for Nibble (Figures~\ref{fig:nibble-time}
and~\ref{fig:nibble-conductance}), we see that increasing $T$ and/or
decreasing $\epsilon$ leads to higher running time and improved
conductance. The same thing happens for HK-PR
(Figures~\ref{fig:hk-pr-time} and~\ref{fig:hk-pr-conductance}) when
increasing $N$ and/or decreasing $\epsilon$.  For PR-Nibble, we see
that decreasing $\epsilon$ leads to higher running time and lower
conductance (Figures~\ref{fig:pr-nibble-time}
and~\ref{fig:pr-nibble-conductance}). Finally, for rand-HK-PR, we see that
increasing $K$ and/or increasing $N$ leads to higher running time and
lower conductance (Figures~\ref{fig:rand-hk-pr-time}
and~\ref{fig:rand-hk-pr-conductance}) .

\begin{figure*}[!t]
\centering
\subfigure[Nibble]{
\includegraphics[width=0.48\textwidth]{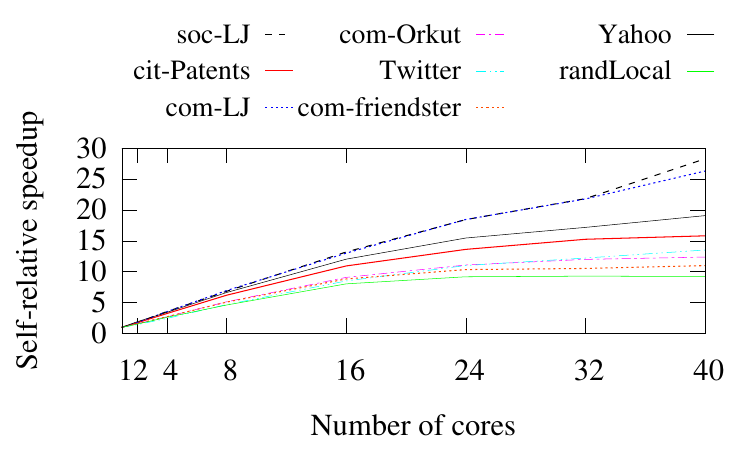}
\label{fig:bfs1}
}
\subfigure[PR-Nibble]{
\includegraphics[width=0.48\textwidth]{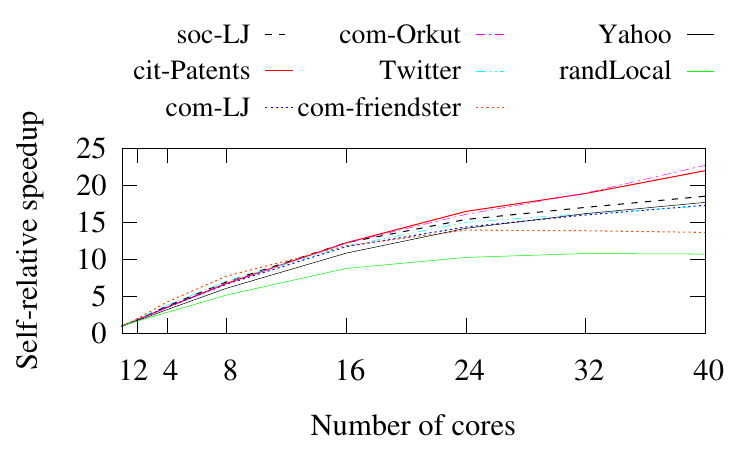}
\label{fig:pr1}
}

\subfigure[HK-PR]{
\includegraphics[width=0.48\textwidth]{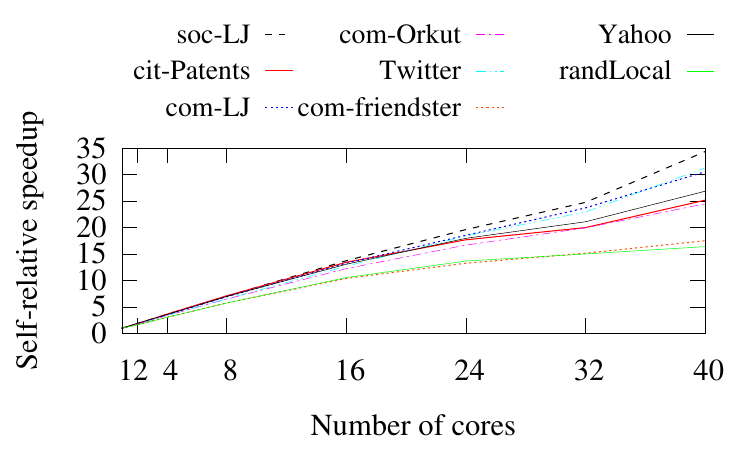}
\label{fig:speedup}
}
\subfigure[rand-HK-PR]{
\includegraphics[width=0.48\textwidth]{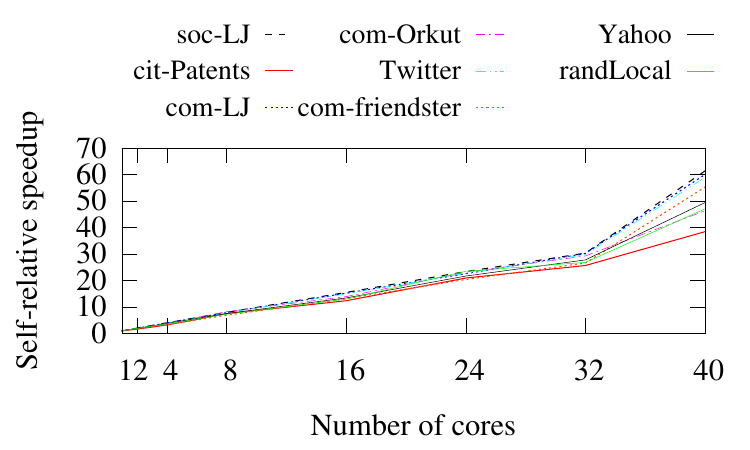}
\label{fig:speedup}
}
\caption{Parallel self-relative speedup versus cores count for the four parallel algorithms on several input graphs. On 40 cores, 80 hyper-threads are used. (Best viewed in color.)}
\label{fig:speedup-plots}
\end{figure*}

\myparagraph{Parallel Performance of Local Clustering} Here we study
the parallel scalability of our implementations of Nibble, PR-Nibble,
HK-PR, and Rand-HK-PR.  Table~\ref{table:numbers} shows the parallel ($T_{40}$)
and single-thread ($T_1$) running times of our parallel implementations, as
well as the running time of the sequential implementation,
for a setting of the parameters described in the table caption.  
All of the experiments start from a single arbitrary vertex in the
largest component.\footnote{Our codes can easily be
  modified to take as input a seed set with multiple vertices. This
  would increase the frontier sizes at each iteration, leading to more
  parallelism.}  The parameters were set so that for most graphs at
least tens of thousands of vertices were touched; otherwise the
algorithms finish in milliseconds, and there is not enough work to
benefit from parallelism.  For all graphs except for nlpkkt240 and
3D-grid, we see reasonable parallel speedup over the single-thread
times.  For nlpkkt240 and 3D-grid, not many vertices are touched as
the graphs are not well-connected, and so the experiments terminated
quickly. For these types of graphs, there are no good local clusters,
and so it may not be useful to run a local clustering algorithm. We
note that the running times of the algorithms depend highly on the
seed vertex and parameter settings, but we believe that our parallel
algorithms are useful in cases where at least tens of thousands of
vertices are touched (which is a small number for massive graphs).
For Nibble, HK-PR, and rand-HK-PR we see that the parallel version on
a single thread actually outperforms the sequential version in most
cases. For Nibble and HK-PR, we believe this is because our concurrent
hash table (used to represent the sparse sets) is more efficient than
STL's \texttt{unordered\_map}, even on one thread. For rand-HK-PR, the
parallel method uses sorting to obtain the vector rather than
maintaining it in a sparse set as in the sequential case, and this
seems to be more efficient even on a single thread in most cases.

Figure~\ref{fig:speedup-plots} shows the self-relative speedups
(relative to the algorithm's single-thread time $T_1$) of the four
algorithms versus thread count on several input graphs.  Nibble,
PR-Nibble, and HK-PR get reasonable parallel speedup (9--35x on 40
cores), although the speedup is not perfect due to memory contention
when running in parallel and also due to some frontiers
being too small to benefit from parallelism.  rand-HK-PR gets even
better speedup as most of the algorithm is embarrassingly parallel (over 40x on 40 cores
due to two-way hyper-threading).

\myparagraph{Sweep Cut Performance} Here we study the performance of
our parallel sweep cut algorithm. Table~\ref{table:numbers} shows the
running time of our parallel sweep cut implementation and the standard
sequential implementation on the output of Nibble.  The performance
trends for sweep cut were similar when applied to outputs of the other
clustering algorithms.
Except for nlpkkt240 and 3D-grid, where the input cluster was too
small to benefit from parallelism, the self-relative speedup of
parallel sweep ranges from 23 to 28 on 40 cores with hyper-threading.
On a single thread, parallel sweep is slower than sequential sweep due
to overheads of the parallel algorithm (e.g., scanning over
the edges several times instead of just once).

\begin{figure}[!t]
\centering
\includegraphics[width=0.7\columnwidth]{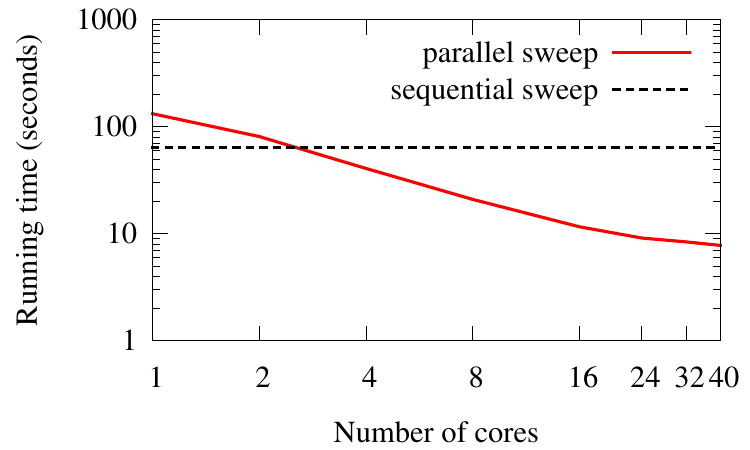}
\caption{Running time (seconds) versus core count in log-log scale of
  sweep cut on a cluster with 1.3 million vertices and 566 million
  edges. On 40 cores, 80 hyper-threads are used.}
\label{fig:sweepVsThread}
\end{figure}

Figure~\ref{fig:sweepVsThread} shows the running time of sweep cut as
a function of thread count (log-log scale). The input cluster was
generated by running Nibble on the Yahoo graph with $T=20$ and
$\epsilon=10^{-9}$. The number of vertices in the cluster is 1.3
million and its volume is 566 million.  We see that the the parallel
implementation scales well (almost linearly) with the number of
threads, and outperforms the sequential implementation with 4 or more
threads.

Figure~\ref{fig:sweep-scale} shows the running time of the parallel
sweep cut on 40 cores versus the volume of the input
set, generated by running Nibble with different parameter settings on
the Yahoo graph. We see that the running time scales nearly linearly,
which is expected since the time is dominated by linear-work
operations (the only part that scales super-linearly is the initial
sort, which takes a small fraction of the total time).

\begin{figure}[!t]
\centering
\includegraphics[width=0.7\columnwidth]{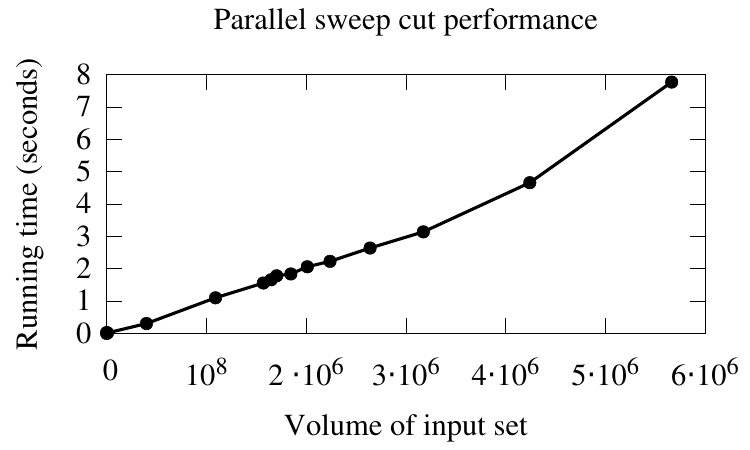}
\caption{Running time (seconds) versus cluster volume for parallel sweep cut on 40 cores with hyper-threading.}
\label{fig:sweep-scale}
\end{figure}

From Table~\ref{table:numbers}, we can see that the sweep cut
takes a significant fraction of (in some cases dominating) the overall
time of running Nibble followed by a sweep cut procedure. This was
also true for the other local clustering algorithms. Thus parallelizing
the sweep cut procedure is important in achieving good overall
performance.

\myparagraph{Network Community Profiles} 
Due to the efficiency of our parallel algorithms we are able to
quickly generate network community profile (NCP) plots for large graphs,
which show the (approximate) best conductance for clusters of a given
size in the graph versus the cluster
size~\cite{LeskovecLDM09}. Figure~\ref{fig:ncp} shows the NCPs for
several graphs with at least a billion edges, larger than any of the
graphs whose NCP has been studied
before~\cite{LeskovecLDM09,Jeub15}. The data was collected by running
PR-Nibble from $10^5$ random seed vertices and by varying $\alpha$ and
$\epsilon$.  For Twitter and com-friendster, the curves are downwards
sloping with increasing cluster size until around 10--100 vertices,
and then upwards sloping, which is consistent with the observation by
Leskovec et al.~\cite{LeskovecLDM09}, that good communities are relatively small.
For Yahoo, although there are
low-conductance clusters with small size, there also seems to be many
low-conductance clusters at larger sizes (tens of thousands of
vertices).

\begin{figure*}[!t]
\centering

\subfigure[Twitter]{
\includegraphics[width=0.31\textwidth]{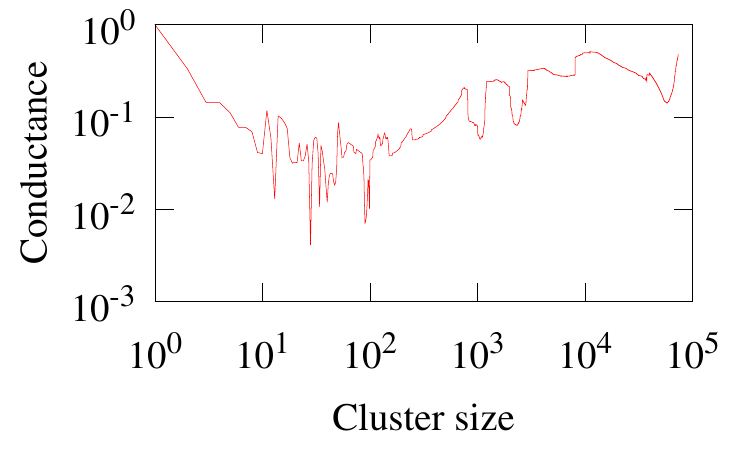}
\label{fig:bfs1}
}
\subfigure[com-friendster]{
\includegraphics[width=0.31\textwidth]{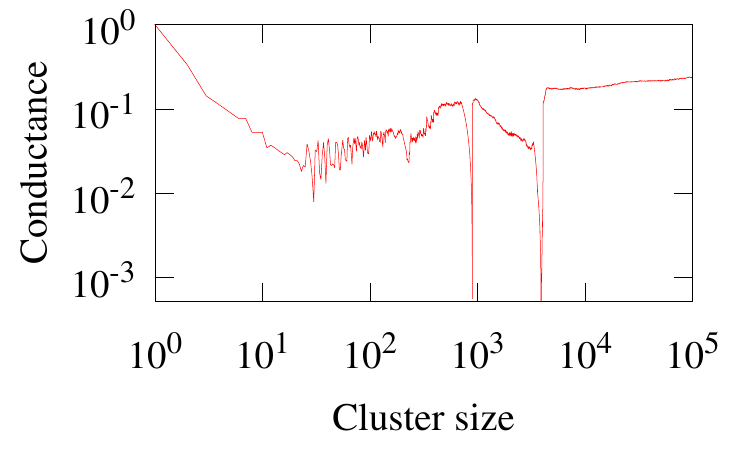}
\label{fig:pr1}
}
\subfigure[Yahoo]{
\includegraphics[width=0.31\textwidth]{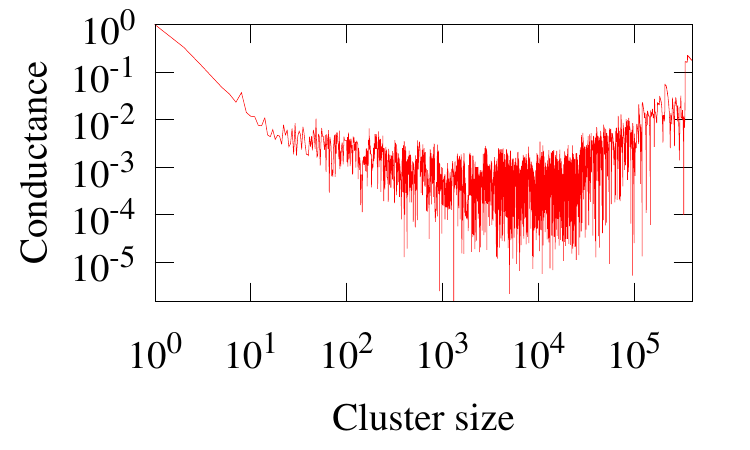}
\label{fig:speedup}
}

\caption{Network community profile (NCP) plots for billion-edge graphs.}
\label{fig:ncp}
\end{figure*}

\section{Other Related Work}\label{sec:alg}
Andersen
and Lang~\cite{AndersenL06} developed a variant of the Nibble 
algorithm with multiple starting vertices for finding communities in
graphs.
The PR-Nibble algorithm has
been extended to directed graphs~\cite{AndersenCL08}.  
A variant of PR-Nibble has also been used to obtain clusters with
better guarantees when the cluster is internally
well-connected~\cite{ZLM13}. 
Finally, statistical/optimization
perspectives on the PR-Nibble algorithm have been
proposed~\cite{GM14_ICML,Fountoulakis15}.

Andersen and Peres~\cite{Andersen2009} present an algorithm called
\emph{evolving sets}, where $f(\phi,n) = O(\sqrt{\phi\log n})$ and
that has a work bound of $O(|S|\polylog(n)/\sqrt{\phi})$. 
The algorithm maintains the position of a random walk starting at the
seed vertex. Starting with a single vertex in a set $S$, each
iteration of the algorithm adds or deletes vertices from $S$ based on
whether the probability of transitioning to a given vertex from the
current set is above some randomly chosen threshold. If in any
iteration the conductance of $S$ is at most $f(\phi,n)$, then the
algorithm returns $S$.  We implemented this algorithm but found the
behavior of the algorithm to vary widely as the random choices in each
iteration can lead to very different sets. We note that the algorithm
can be parallelized work-efficiently by using data-parallel operations
but we omit the discussion in this paper due to space constraints.  Subsequent to
the work of Andersen and Peres, there have been other local clustering
algorithms developed with stronger guarantees~\cite{Gharan2012,Kwok2012,Kwok2016}.

Mahoney et al.~\cite{MahoneyOV12} describe a spectral algorithm for
finding clusters that are locally-biased towards a seed set in terms
of satisfying an additional ``locality'' constraint in the spectral
optimization program. However, the algorithm itself is not local as it
requires work at least linear in the graph size.

Finally, there have been algorithms developed that take as
input a local cluster and return a nearby local cluster with better
conductance~\cite{Andersen2008,MahoneyOV12,OrecchiaZ14,Veldt16}.

\section{Conclusion}\label{sec:conclusion}
We have presented work-efficient parallel algorithms for local
graph clustering and our experiments demonstrated that the algorithms
achieve good performance and scalability, significantly improving the efficiency of the
exploration of local graph clusters in massive graphs.
We have performed experiments studying the output cluster conductance versus
running time of the four local algorithms but did not find any
one algorithm that always dominated the others. Since all of our
parallel algorithms are efficient, data analysts can use any of them
for graph cluster exploration, or even use all of them to find
slightly different clusters of similar size from the same seed set.

We believe that our algorithms are extremely useful in the interactive
setting where a graph is loaded once into memory and many local
cluster computations are executed on it.  In the setting where one
only wants to run a few queries, our algorithms are still very
efficient but the cost of loading the graph will not be amortized
across the queries. To improve the performance in this setting, we are
interested in developing efficient methods for traversing graphs
locally from disk. We are also interested in parallelizing local
flow-based algorithms~\cite{OrecchiaZ14,Veldt16} for improving cluster
quality.

\section*{Acknowledgements} Julian Shun is supported by the Miller
Institute for Basic Research in Science at UC
Berkeley. Farbod Roosta-Khorasani, Kimon Fountoulakis, and Michael W. Mahoney are supported by
the DARPA XDATA and GRAPHS programs.  We thank the Intel Labs Academic
Research Office for the Parallel Algorithms for Non-Numeric Computing
Program for providing the machine for our experiments. We thank Guy
Blelloch for early discussions on efficiently computing the
conductance of sets in parallel.

{
\bibliographystyle{abbrv}
\bibliography{ref}

\begin{thebibliography}{10}

\bibitem{Andersen2007}
R.~Andersen and F.~Chung.
\newblock Detecting sharp drops in {PageRank} and a simplified local
  partitioning algorithm.
\newblock In {\em International Conference on Theory and Applications of Models
  of Computation}, pages 1--12, 2007.

\bibitem{Andersen2006}
R.~Andersen, F.~Chung, and K.~Lang.
\newblock Local graph partitioning using {PageRank} vectors.
\newblock In {\em IEEE Symposium on Foundations of Computer Science (FOCS)},
  pages 475--486, 2006.

\bibitem{AndersenCL08}
R.~Andersen, F.~R.~K. Chung, and K.~J. Lang.
\newblock Local partitioning for directed graphs using {PageRank}.
\newblock {\em Internet Mathematics}, 5(1):3--22, 2008.

\bibitem{Andersen2012}
R.~Andersen, D.~F. Gleich, and V.~Mirrokni.
\newblock Overlapping clusters for distributed computation.
\newblock In {\em ACM International Conference on Web Search and Data Mining
  (WSDM)}, pages 273--282, 2012.

\bibitem{AndersenL06}
R.~Andersen and K.~J. Lang.
\newblock Communities from seed sets.
\newblock In {\em International Conference on World Wide Web (WWW)}, pages
  223--232, 2006.

\bibitem{Andersen2008}
R.~Andersen and K.~J. Lang.
\newblock An algorithm for improving graph partitions.
\newblock In {\em ACM-SIAM Symposium on Discrete Algorithms (SODA)}, pages
  651--660, 2008.

\bibitem{Andersen2009}
R.~Andersen and Y.~Peres.
\newblock Finding sparse cuts locally using evolving sets.
\newblock In {\em ACM Symposium on Theory of Computing (STOC)}, pages 235--244,
  2009.

\bibitem{brent1974parallel}
R.~P. Brent.
\newblock The parallel evaluation of general arithmetic expressions.
\newblock {\em J. ACM (JACM)}, 21(2):201--206, 1974.

\bibitem{Chung2009}
F.~Chung.
\newblock A local graph partitioning algorithm using heat kernel {Pagerank}.
\newblock {\em Internet Mathematics}, 6(3):315--330, 2009.

\bibitem{Chung2015}
F.~Chung and O.~Simpson.
\newblock Computing heat kernel {Pagerank} and a local clustering algorithm.
\newblock In {\em International Workshop on Combinatorial Algorithms (IWOCA)},
  pages 110--121, 2015.

\bibitem{Chung2015a}
F.~Chung and O.~Simpson.
\newblock Distributed algorithms for finding local clusters using heat kernel
  {Pagerank}.
\newblock In {\em International Workshop on Algorithms and Models for the Web
  Graph (WAW)}, pages 177--189, 2015.

\bibitem{CLRS}
T.~H. Cormen, C.~E. Leiserson, R.~L. Rivest, and C.~Stein.
\newblock {\em Introduction to Algorithms (3. ed.)}.
\newblock MIT Press, 2009.

\bibitem{Devine}
K.~D. Devine, E.~G. Boman, and G.~Karypis.
\newblock Partitioning and load balancing for emerging parallel applications
  and architectures.
\newblock In {\em Parallel Processing for Scientific Computing}, chapter~6,
  pages 99--126.

\bibitem{Fountoulakis15}
K.~Fountoulakis, X.~Cheng, J.~Shun, F.~Roosta-Khorasani, and M.~W. Mahoney.
\newblock Exploiting optimization for local graph clustering.
\newblock {\em arXiv}, 1602.01886, 2015.

\bibitem{Gharan2012}
S.~O. Gharan and L.~Trevisan.
\newblock Approximating the expansion profile and almost optimal local graph
  clustering.
\newblock In {\em IEEE Symposium on Foundations of Computer Science (FOCS)},
  pages 187--196, 2012.

\bibitem{Gil91a}
J.~Gil, Y.~Matias, and U.~Vishkin.
\newblock Towards a theory of nearly constant time parallel algorithms.
\newblock In {\em Foundations of Computer Science (FOCS)}, pages 698--710,
  1991.

\bibitem{GM14_ICML}
D.~F. Gleich and M.~W. Mahoney.
\newblock Anti-differentiating approximation algorithms: A case study with
  min-cuts, spectral, and flow.
\newblock In {\em International Conference on Machine Learning (ICML)}, pages
  1018--1025, 2014.

\bibitem{Gleich2012}
D.~F. Gleich and C.~Seshadhri.
\newblock Vertex neighborhoods, low conductance cuts, and good seeds for local
  community methods.
\newblock In {\em ACM SIGKDD International Conference on Knowledge Discovery
  and Data Mining (KDD)}, pages 597--605, 2012.

\bibitem{Powergraph}
J.~Gonzalez, Y.~Low, H.~Gu, D.~Bickson, and C.~Guestrin.
\newblock {PowerGraph}: Distributed graph-parallel computation on natural
  graphs.
\newblock In {\em Symposium on Operating Systems Design and Implementation
  (OSDI)}, pages 17--30, 2012.

\bibitem{Harenberg}
S.~Harenberg, G.~Bello, L.~Gjeltema, S.~Ranshous, J.~Harlalka, R.~Seay,
  K.~Padmanabhan, and N.~Samatova.
\newblock Community detection in large-scale networks: a survey and empirical
  evaluation.
\newblock {\em Wiley Interdisciplinary Reviews: Computational Statistics},
  6(6):426--439, 2014.

\bibitem{Huang1996}
Y.-W. Huang, N.~Jing, and E.~A. Rundensteiner.
\newblock Effective graph clustering for path queries in digital map databases.
\newblock In {\em International Conference on Information and Knowledge
  Management (CIKM)}, pages 215--222, 1996.

\bibitem{JaJa92}
J.~Jaja.
\newblock {\em Introduction to Parallel Algorithms}.
\newblock Addison-Wesley Professional, 1992.

\bibitem{Jeub15}
L.~G.~S. Jeub, P.~Balachandran, M.~A. Porter, P.~J. Mucha, and M.~W. Mahoney.
\newblock Think locally, act locally: Detection of small, medium-sized, and
  large communities in large networks.
\newblock {\em Physical Review E}, 91:012821, 2015.

\bibitem{Kloster2014}
K.~Kloster and D.~F. Gleich.
\newblock Heat kernel based community detection.
\newblock In {\em ACM SIGKDD International Conference on Knowledge Discovery
  and Data Mining (KDD)}, pages 1386--1395, 2014.

\bibitem{Kloumann2014}
I.~M. Kloumann and J.~M. Kleinberg.
\newblock Community membership identification from small seed sets.
\newblock In {\em ACM SIGKDD International Conference on Knowledge Discovery
  and Data Mining (KDD)}, pages 1366--1375, 2014.

\bibitem{Kwak2010}
H.~Kwak, C.~Lee, H.~Park, and S.~Moon.
\newblock What is {Twitter}, a social network or a news media?
\newblock In {\em International Conference on World Wide Web (WWW)}, pages
  591--600, 2010.

\bibitem{Kwok2012}
T.~C. Kwok and L.~C. Lau.
\newblock Finding small sparse cuts by random walk.
\newblock In {\em International Workshop on Approximation, Randomization, and
  Combinatorial Optimization (APPROX)}, pages 615--626, 2012.

\bibitem{Kwok2016}
T.~C. Kwok, L.~C. Lau, and Y.~T. Lee.
\newblock Improved {Cheeger's} inequality and analysis of local graph
  partitioning using vertex expansion and expansion profile.
\newblock In {\em ACM-SIAM Symposium on Discrete Algorithms (SODA)}, pages
  1848--1861, 2016.

\bibitem{LeskovecLDM09}
J.~Leskovec, K.~J. Lang, A.~Dasgupta, and M.~W. Mahoney.
\newblock Community structure in large networks: Natural cluster sizes and the
  absence of large well-defined clusters.
\newblock {\em Internet Mathematics}, 6(1):29--123, 2009.

\bibitem{Leskovec2010}
J.~Leskovec, K.~J. Lang, and M.~Mahoney.
\newblock Empirical comparison of algorithms for network community detection.
\newblock In {\em International Conference on World Wide Web (WWW)}, pages
  631--640, 2010.

\bibitem{Liao09}
C.-S. Liao, K.~Lu, M.~Baym, R.~Singh, and B.~Berger.
\newblock Isorankn: spectral methods for global alignment of multiple protein
  networks.
\newblock {\em Bioinformatics}, 25(12):253--258, 2009.

\bibitem{MahoneyOV12}
M.~W. Mahoney, L.~Orecchia, and N.~K. Vishnoi.
\newblock A local spectral method for graphs: with applications to improving
  graph partitions and exploring data graphs locally.
\newblock {\em Journal of Machine Learning Research (JMLR)}, 13:2339--2365,
  2012.

\bibitem{MajiVM11}
S.~Maji, N.~K. Vishnoi, and J.~Malik.
\newblock Biased normalized cuts.
\newblock In {\em {IEEE} Conference on Computer Vision and Pattern Recognition
  (CVPR)}, pages 2057--2064, 2011.

\bibitem{Pregel}
G.~Malewicz, M.~H. Austern, A.~J. Bik, J.~C. Dehnert, I.~Horn, N.~Leiser, and
  G.~Czajkowski.
\newblock Pregel: a system for large-scale graph processing.
\newblock In {\em ACM International Conference on Management of Data (SIGMOD)},
  pages 135--146, 2010.

\bibitem{McSherry2015}
F.~McSherry, M.~Isard, and D.~G. Murray.
\newblock Scalability! but at what {COST}?
\newblock In {\em USENIX Conference on Hot Topics in Operating Systems
  (HotOS)}, 2015.

\bibitem{Ng01onspectral}
A.~Y. Ng, M.~I. Jordan, and Y.~Weiss.
\newblock On spectral clustering: Analysis and an algorithm.
\newblock In {\em Advances in Neural Information Processing Systems (NIPS)},
  pages 849--856, 2001.

\bibitem{OrecchiaZ14}
L.~Orecchia and Z.~A. Zhu.
\newblock Flow-based algorithms for local graph clustering.
\newblock In {\em {ACM-SIAM} Symposium on Discrete Algorithms (SODA)}, pages
  1267--1286, 2014.

\bibitem{Perozzi2014}
B.~Perozzi, C.~McCubbin, and J.~T. Halbert.
\newblock Scalable graph clustering with parallel approximate {PageRank}.
\newblock {\em Social Network Analysis and Mining}, 4(1):1--11, 2014.

\bibitem{RR89}
S.~Rajasekaran and J.~H. Reif.
\newblock Optimal and sublogarithmic time randomized parallel sorting
  algorithms.
\newblock {\em SIAM J. Comput.}, pages 594--607, 1989.

\bibitem{Schaeffer2007}
S.~E. Schaeffer.
\newblock Survey: Graph clustering.
\newblock {\em Comput. Sci. Rev.}, 1(1):27--64, Aug. 2007.

\bibitem{ShunB2013}
J.~Shun and G.~E. Blelloch.
\newblock Ligra: A lightweight graph processing framework for shared memory.
\newblock In {\em ACM Symposium on Principles and Practice of Parallel
  Programming (PPoPP)}, pages 135--146, 2013.

\bibitem{ShunB14}
J.~Shun and G.~E. Blelloch.
\newblock Phase-concurrent hash tables for determinism.
\newblock In {\em ACM Symposium on Parallelism in Algorithms and Architectures
  (SPAA)}, pages 96--107, 2014.

\bibitem{SBFG}
J.~Shun, G.~E. Blelloch, J.~T. Fineman, P.~B. Gibbons, A.~Kyrola, H.~V.
  Simhadri, and K.~Tangwongsan.
\newblock Brief announcement: the {Problem Based Benchmark Suite}.
\newblock In {\em ACM Symposium on Parallelism in Algorithms and Architectures
  (SPAA)}, pages 68--70, 2012.

\bibitem{Spielman2004}
D.~A. Spielman and S.-H. Teng.
\newblock Nearly-linear time algorithms for graph partitioning, graph
  sparsification, and solving linear systems.
\newblock In {\em ACM Symposium on Theory of Computing (STOC)}, pages 81--90,
  2004.

\bibitem{SpielmanT13}
D.~A. Spielman and S.-H. Teng.
\newblock A local clustering algorithm for massive graphs and its application
  to nearly linear time graph partitioning.
\newblock {\em SIAM J. Comput.}, 42(1):1--26, 2013.

\bibitem{Veldt16}
L.~Veldt, D.~Gleich, and M.~Mahoney.
\newblock A simple and strongly-local flow-based method for cut improvement.
\newblock In {\em International Conference on Machine Learning (ICML)}, pages
  1938--1947, 2016.

\bibitem{Voevodski2009}
K.~Voevodski, S.-H. Teng, and Y.~Xia.
\newblock Finding local communities in protein networks.
\newblock {\em BMC Bioinformatics}, 10(1):1--14, 2009.

\bibitem{Whang2013}
J.~J. Whang, D.~F. Gleich, and I.~S. Dhillon.
\newblock Overlapping community detection using seed set expansion.
\newblock In {\em ACM International Conference on Information and Knowledge
  Management (CIKM)}, pages 2099--2108, 2013.

\bibitem{Wu2015}
Y.~Wu, R.~Jin, J.~Li, and X.~Zhang.
\newblock Robust local community detection: On free rider effect and its
  elimination.
\newblock {\em Proc. VLDB Endow.}, 8(7):798--809, Feb. 2015.

\bibitem{Yang2013}
J.~Yang and J.~Leskovec.
\newblock Defining and evaluating network communities based on ground-truth.
\newblock {\em Knowledge and Information Systems}, 42(1):181--213, 2013.

\bibitem{ZLM13}
Z.~A. Zhu, S.~Lattanzi, and V.~S. Mirrokni.
\newblock A local algorithm for finding well-connected clusters.
\newblock In {\em International Conference on Machine Learning (ICML)}, pages
  396--404, 2013.

\end{thebibliography}
}

\end{document}